\newtheorem{proposition}{Proposition}[section]
\newtheorem{remark}{Remark}[section]
\title{A spectral method for an Optimal Investment problem with Transaction Costs under Potential Utility\thanks{Research supported by Spanish MINECO
under grants MTM2013-42538-P and MTM2016-78995-P. The first author acknowledges the support of European Cooperation in Science and Technology through COST Action IS1104.}}
\author{Javier de Frutos\thanks{Instituto de Matem\'{a}ticas (IMUVA), Universidad de Valladolid, Paseo de Bel\'{e}n 7, Valladolid, Spain. e-mail:frutos@mac.uva.es} and V\'{\i}ctor Gat\'{o}n\thanks{Instituto de Matem\'{a}ticas (IMUVA), Universidad de Valladolid,  Paseo de Bel\'{e}n 7, Valladolid, Spain. e-mail:vgaton@mac.uva.es}}
\begin{document}

\maketitle

\begin{abstract}This paper concerns the numerical solution of the finite-horizon Optimal Investment problem with transaction costs under Potential Utility. The problem is initially posed in terms of an evolutive HJB equation with gradient constraints. In \cite{DaiYi}, the problem is reformulated as a non-linear parabolic double obstacle problem posed in one spatial variable and defined in an unbounded domain where several explicit properties and formulas are obtained. The restatement of the problem in polar coordinates allows to pose the problem in one spatial variable in a finite domain, avoiding some of the technical difficulties of the numerical solution of the previous statement of the problem. If high precision is required, the spectral numerical method proposed becomes more efficient than simpler methods as finite differences for example.

\textbf{Keywords:} {Optimal Investment, Potential Utility, Transaction costs, Spectral method.}
\end{abstract}

\section{Introduction}

This paper concerns with the numerical solution of the finite horizon optimal investment problem with transaction costs under Potential Utility. Let us consider an investor whose wealth can be inverted in a risky stock and in a riskless bank account. We suppose that the investor is risk averse with constant relative risk aversion (CRRA). In \cite{Merton}, Merton showed that, in absence of transaction costs, the problem can be explicitly solved. The optimal strategy consists in keeping a fixed proportion between the money invested in the risky asset and the bank account. When transaction cost are considered, the Merton strategy is unfeasible because it requires a continuous portfolio rebalancing with unbounded costs.

Proportional transaction costs were first introduced in \cite{Magill}.
More recently, in \cite{DaiYi}, the problem was reformulated as a non-linear parabolic double obstacle problem posed in one spatial variable, defined in an unbounded domain. Several explicit properties and formulae were obtained in \cite{DaiYi}, although explicit formulae for the solution are not available. The problem was numerically solved in \cite{Arregui}, where the authors employ a characteristics method with a projected relaxation scheme. The scheme proportionates satisfactory results with good agrement with the results in \cite{DaiYi}.

When we want to solve financial problems, like finding investment strategies or pricing derivative contracts, in general, there is no known closed form solution of the different problems and several numerical methods have been employed. Without the aim to be exhaustive, Monte-Carlo based methods (\cite{Duan2}, \cite{Stentoft}), Piecewise linear interpolations (\cite{Benameur}), Lattice methods (\cite{Lyuu}, \cite{Ritchen}), Finite Elements (\cite{Achdou}) or Spectral methods (\cite{Frutos}) are some of them. A general review of financial problems or models, numerical techniques and software tools can be found in \cite{DuanHandbook}.

The objective of this paper is to construct a spectral method specifically adapted to the Optimal Investment problem with Potential Utility when proportional transaction costs are present. As it is well known, spectral methods  \cite{Canuto} are a class of spatial discretizations for partial differential equations which offer fast convergence in the case of smooth solutions. They are not widely used yet in numerical finances because it is usually believed that the lack of smoothness present in most interesting problems makes spectral methods uncompetitive. However, several papers have used spectral methods for problems in Finance with good results. For instance,  in  \cite{Chiarella} a Fourier-Hermite procedure to the valuation of american options has been presented. In \cite{Frutos} a spectral method based on Laguerre polynomials has been employed to numerical valuation of bonds with embedded options. A Fourier spectral method to compute fast and accurate prices of american options written on asset following GARCH models has been presented in \cite{Breton}. In \cite{Breton2} the authors use an adaptive method with Chebyshev polynomials coupled with a dynamic programming procedure for contracts with early exercise features. In \cite{Oosterlee} a very efficient procedure for asian options defined on arithmetic averages has been proposed. In all cases, the spectral-based methods have been proved to be competitive with other alternatives in terms of precision versus computing time needed to compute the numerical solution.

In the present paper, we restate the problem using polar coordinates. This allows to consider a double parabolic obstacle problem in one spatial-like variable defined in a bounded domain. Furthermore, this formulation avoids the emergence of nonlinear terms simplifying the numerical treatment. We present a Chebyshev spectral approach based on adaptive meshes to locate the optimal frontiers.
Although some of the numerical difficulties that appear with the parabolic double obstacle problem are avoided with our approach, we still have to deal with the so-called Gibbs effect, which comes from the fact that the objective function is continuous but not differentiable at maturity. We show that this issue can be circumvented by using a time-adapted spatial mesh. We show that our approach is efficient by comparing it with a standard finite difference scheme.

The outline of the paper is as follows. In Section \ref{Ch3TOIPrecon}, a description of the Optimal Investment problem as it can be found in \cite{Davis} or \cite{Shreve} is presented. In Section \ref{Ch3DPR}, the problem is reformulated as a parabolic double obstacle problem as it was done in \cite{DaiYi}. Afterwards, we propose an equivalent formulation of the problem employing polar coordinates. Section \ref{Ch3NM} is devoted to a mesh-adapted Chebyshev-collocation method which solves the problem of Section \ref{Ch3DPR}. In Section \ref{Ch3numresults} we perform the numerical analysis of the method. Section \ref{Ch3Conclus} presents some conclusions and future research.

\section{The Optimal Investment Problem}\label{Ch3TOIPrecon}

We consider an optimal investment problem with transaction costs, \cite{Davis}, \cite{Shreve}.
Let $\left(\Omega,\mathscr{F},P\right)$ be a filtered probability space. Let us consider an investor who holds amounts $X(t)$ and $Y(t)$ in a bank and a stock account respectively. The dynamics of the processes is
\begin{equation}\label{Ch3ecudinacc}
\begin{aligned}
dX(t) &=rX(t)dt-(1+\lambda)dL(t)+(1-\mu)dM(t), && X(t_0)=x, \\
dY(t) &=\alpha Y(t) dt + \sigma Y(t) dz_t+dL(t)-dM(t), && Y(t_0)=y,
\end{aligned}
\end{equation}
where $r$ denotes the constant risk-free rate, $\alpha$ is the constant expected rate of return of the stock, $\sigma>0$ is the constant volatility of the stock and $z_t$ is a standard Brownian motion such that $\mathscr{F}^{z}_t\subseteq\mathscr{F}$ where $\mathscr{F}^{z}_t$ is the natural filtration induced by $z_t$. We suppose that $L(t)$ and $M(t)$ are adapted, right-continuous, nonnegative and nondecreasing processes representing the cumulative monetary values of the stock purchased or sold respectively and $\lambda\ge 0$ and $0\le\mu<1$, represent the constant proportional transaction costs incurred on the purchase or sale of the stock. In this paper we assume that $\lambda+\mu>0$.

The finance meaning of equation (\ref{Ch3ecudinacc}) is natural. Along time, the rate of change of the amount of money invested in the risky asset, represented by the stochastic process $Y(t)$, evolves according to a standard geometric brownian motion modified by the difference between the amount of money invested in buying stock, $dL(t)$, and the amount of money obtained selling stock, $dM(t)$. At the same time, the value of the bank account, $X(t)$, is instantaneously increased by the difference $-(1+\lambda)dL(t)+(1-\mu)dM(t)$, that represents the net flow of money resulting from stock negotiations, including the transaction costs. Processes $L(t)$ and $M(t)$ can be financially understood as an historical record of the total purchases and sales of stock of the investor.

The net wealth is the money the investor would have if he closes his positions. It can be written as
\begin{equation}\label{Ch3ecunewwealth_long}
W(t)=
 X(t)+(1-\mu)Y(t), \quad \text{if} \ Y(t)\geq0,
\end{equation}
if the investor is long in the stock or
\begin{equation}\label{Ch3ecunewwealth_short}
W(t)= X(t)+(1+\lambda)Y(t), \quad \text{if} \ Y(t)<0,
\end{equation}
in case the investor is short in the stock.

Let $U(w)$ be a utility function, that is, a continuous, strictly increasing, concave function. The optimal value function is given by:
\begin{equation}\label{Ch3defvarpphi}
\varphi(x,y,t)=\sup_{(L,M)\in A_t(x,y)}E\left[\left.U\left(W(T)\right)\right|(X(t),Y(t))=(x,y)\right],
\end{equation}
for all $ (x,y,t)\in\mathbb{S}\times[0,T]$, where $A_t(x,y)$ is the set of admissible strategies, defined as the set of processes $(L,M)$ such that
if $(X(t),Y(t))=(x,y)\in\mathbb{S}$ then  $(X(\tau),Y(\tau))\in\mathbb{S}$ for $t\le \tau\le T$ and where $\mathbb{S}$ is
the Solvency Region,
\begin{equation}\label{Ch3defsolvreg}
\mathbb{S}=\left\{(x,y)\in \mathbb{R}^2  \mid x+(1+\lambda)y >0,  x+(1-\mu)y>0\right\}.
\end{equation}

In this paper, we assume that $U(w)$ is a potential function (constant relative risk aversion utility function) of the form
\begin{equation*}
U\left(w\right)=\frac{w^{\gamma}}{\gamma},
\end{equation*}
for some constant $\gamma$, $0<\gamma<1$.

 The optimal value function (\ref{Ch3defvarpphi}), see \cite{Shreve}, is the viscosity solution in $\mathbb{S}\times [0,T]$ of
\begin{equation}\label{Chp3HJBecu1}
\min\left\{-\varphi_t-{\mathcal{L}}\varphi,\ -(1-\mu)\varphi_x+\varphi_y, \ (1+\lambda)\varphi_x-\varphi_y \right\}=0,
\end{equation}
subject to:
\begin{equation}\label{Chp3HJBecu2}
\varphi(x,y,T)=
\left\{
\begin{aligned}
& U(x+(1-\mu)y), \quad \text{if} \ y>0, \\
& U(x+(1+\lambda)y), \quad \text{if} \ y\leq0,
\end{aligned}
\right.
\end{equation}
where
\begin{equation}\label{Chp3HJBecu3}
\mathcal{L}\varphi=\frac{1}{2}\sigma^2y^2\varphi_{yy}+\alpha y \varphi_y+rx\varphi_x.
\end{equation}

The existence and uniqueness of a viscosity solution of (\ref{Chp3HJBecu1})-(\ref{Chp3HJBecu2}) has been proved in \cite{Davis}. There, it is proved that at any time $t$, the spatial domain is divided in three regions, namely, in financial terms, the Buying Region $\text{BR}(t)=\{ (x,y)|(1+\lambda)\varphi_x-\varphi_y=0\}$,
the Selling Region $\text{SR}(t)= \{(x,y)|-(1-\mu)\varphi_x+\varphi_y=0\}$ and the No Transactions Region
 $\text{NT}(t)=\{ (x,y)| -\varphi_t-\mathcal{L}\varphi=0\}$.
 The Selling and Buying Regions do not intersect.

For simplicity in the exposition, we suppose that $\alpha>r$. With this hypothesis, short-selling is always a suboptimal strategy \cite{Cvitanic}, \cite{Merton}, \cite{Shreve}. This means that the optimal trading strategy is always to have a nonnegative amount of money invested in the stock.

\section{Reformulation of the problem.}\label{Ch3DPR}

As remarked in \cite{Davis}, the choice of the Potential Utility function is interesting since it leads to the homothetic property in the optimal value function,
\begin{equation}\label{homothetic}
\varphi(\rho x, \rho y,t)=\rho^{\gamma} \varphi (x,y,t), \quad \rho>0.
\end{equation}
This property is used in \cite{DaiYi} to reduce the dimensionality of the problem. Setting $z=\frac{x}{y}, \ z \in \Omega=(-(1-\mu), \ \infty)$, a new function $G(x,t)=\varphi(x,1,t)$ is introduced in \cite{DaiYi}, so that:
\begin{equation}\label{Checamborig}
\varphi(x,y,t)=y^{\gamma}G\left(z,t\right), \quad w=\frac{1}{\gamma}\log(\gamma G), \quad v(z,t)=w_z(z,t).
\end{equation}

In \cite{DaiYi}, the authors prove that $v(z,t)$ is the solution of an one dimensional parabolic double obstacle problem with two free boundaries equivalent to  (\ref{Chp3HJBecu1}).

Furthermore, it is also proved in \cite{DaiYi}, that there exist two continuous monotonically increasing functions
\begin{equation}\label{Ch3FunfronC}
\text{BR}^c_F,\text{SR}^c_F:[0,T]\rightarrow(-(1-\mu),+\infty],
\end{equation}
such that $\text{BR}^c_F(t)>\text{SR}^c_F(t), \ \forall t\geq0$. The Buying and Selling Regions are characterized by
\begin{equation*}
\begin{aligned}
\text{\textbf{SR}} &= \left\{(z,t)\in\Omega\times[0,T] \ \mid z\leq\text{SR}^c_F(t), t\in[0, \ T] \right\}, \\
\text{\textbf{BR}} &= \left\{(z,t)\in\Omega\times[0,T] \ \mid z\geq\text{BR}^c_F(t), t\in[0, \ T] \right\}. \\
\end{aligned}
\end{equation*}

Although other properties and explicit formulas are obtained in \cite{DaiYi}, a complete analytical solution is still missing and numerical procedures have to be used, see, for example, \cite{Arregui}.

Here, inspired by \cite{DaiYi}, we take advantage of (\ref{homothetic}) by working in polar coordinates $x=b \cos(\theta)$, $ y=b \sin(\theta)$. It is not difficult to show that  (\ref{Chp3HJBecu1})-(\ref{Chp3HJBecu3}) are equivalent to
\begin{equation}\label{Ch3ecuinicpol}
\min  \left\{ -\varphi_t-\mathcal{L}\varphi,\ -(1-\mu)\mathcal{L}_1\varphi+ \mathcal{L}_2\varphi ,
 (1+\lambda)\mathcal{L}_1\varphi
-\mathcal{L}_2\varphi \right\}=0,
\end{equation}
subject to :
\begin{equation}\label{finalcondition1}
\varphi(b,\theta,T)=\begin{cases}
U(b\cos(\theta)+(1-\mu)b\sin(\theta)), \quad \text{if} \ \theta>0, \\
U(b\cos(\theta)+(1+\lambda)b \sin(\theta)), \quad \text{if} \ \theta\le 0,
\end{cases}
\end{equation}
where
 $$\mathcal{L}_1\varphi=\cos(\theta)\varphi_b-\frac{\sin(\theta)}{b}\varphi_{\theta},\quad
 \mathcal{L}_2\varphi=\sin(\theta)\varphi_b+\frac{\cos(\theta)}{b}\varphi_{\theta} $$
 and
\begin{equation*}
\begin{aligned}
\mathcal{L}\varphi=& \frac{1}{2}\sigma^2\bigl (b\sin(\theta)\bigr)^2
\bigl(\sin^2(\theta)\varphi_{bb}+\frac{2\sin(\theta)\cos(\theta)}{b}\varphi_{b\theta}\\
&\phantom{\frac{1}{2}\sigma^2\bigl(b\sin(\theta)\bigr)^2}
+\frac{\cos^2(\theta)}{b^2}\varphi_{\theta\theta}
+\frac{\cos^2(\theta)}{b}\varphi_b-\frac{2\sin(\theta)\cos(\theta)}{b^2}\varphi_{\theta}\bigr) \\
& +\alpha b \sin(\theta)\bigl(\sin(\theta)\varphi_b+\frac{\cos(\theta)}{b}\varphi_{\theta}\bigr)
+rb \cos(\theta)\bigl(\cos(\theta)\varphi_b-\frac{\sin(\theta)}{b}\varphi_{\theta}\bigr).
\end{aligned}
\end{equation*}

Based on (\ref{homothetic}), we conjecture  a solution to (\ref{Ch3ecuinicpol})  of the form:
\begin{equation*}
\varphi(b,\theta,t)=b^{\gamma} V(\theta,t).
\end{equation*}
Taking into account that
\begin{equation*}
\begin{aligned}
\varphi_b &=\gamma b^{\gamma-1} V,&\varphi_{bb}&=\gamma(\gamma-1)b^{\gamma-2}V,&\varphi_{b\theta}&=\gamma b^{\gamma-1}V_{\theta},& \\
\varphi_{\theta} &=b^{\gamma} V_{\theta},&\varphi_{\theta \theta}&=b^{\gamma}V_{\theta \theta}, \quad &\varphi_t&=b^{\gamma} V_{t},& \\
\end{aligned}
\end{equation*}
and substituting in (\ref{Ch3ecuinicpol}), (\ref{finalcondition1}), we see that $V(\theta,t)$ satisfies,
\begin{align}\label{Ch3enunprobpolar}
\min & \left\{-V_t-g_2(\theta)V_{\theta \theta}-g_1(\theta)V_{\theta}-g_0(\theta) V,
 -V_{\theta}+ \gamma\frac{(1+\lambda)\cos(\theta)-\sin(\theta)}{(1+\lambda)\sin(\theta)+\cos(\theta)}V, \right. \nonumber\\
& \left. \ V_{\theta}- \gamma\frac{(1-\mu)\cos(\theta)-\sin(\theta)}{(1-\mu)\sin(\theta)+\cos(\theta)}V \right\}=0, \  \theta \in(\beta_1, \beta_2),\ t\in[0,T).
\end{align}
subject to:
\begin{equation}\label{Ch3enunprobpolar2}
V(\theta,T)=
\begin{cases}
\frac{1}{\gamma}\left(\cos(\theta)+(1-\mu)\sin(\theta)\right)^{\gamma}, \quad \text{if} \ \theta>0, \\
\frac{1}{\gamma}\left(\cos(\theta)+(1+\lambda)\sin(\theta)\right)^{\gamma}, \quad \text{if} \ \theta\le 0.
\end{cases}
\end{equation}
The functions $g_i$, $i=0,1,2$, are given by
\begin{equation*}
\begin{aligned}
g_0(\theta) &= \gamma \Bigl(\bigl(\frac{1}{2}\sigma^2\sin^2(\theta)(\gamma-1)\sin^2(\theta)+\cos^2(\theta)\bigr)+\alpha\sin^2(\theta)+r\cos^2(\theta)  \Bigr),\\
g_1(\theta) &= (\gamma-1)\sigma^2\cos(\theta)\sin^3(\theta)+(\alpha-r)\sin(\theta)\cos(\theta), \\
g_2(\theta) &= \frac{1}{2}\sigma^2\sin^2(\theta)\cos^2(\theta). \\
\end{aligned}
\end{equation*}

The Solvency Region in the new coordinates is given by:
\begin{equation}\label{Ch3changpolcoord}
b \in [0,\ \infty), \quad \theta \in(\beta_1, \ \beta_2)
\end{equation}
where
\begin{equation}
\beta_1 = \arctan\left(\frac{-1}{1+\lambda}\right), \quad \beta_2 = \arctan\left(\frac{-1}{1-\mu}\right)+\pi. \\
\end{equation}

This formulation has several advantages over the formulation of \cite{DaiYi}. As in \cite{DaiYi}, the problem is one dimensional ((\ref{Ch3enunprobpolar})-(\ref{Ch3enunprobpolar2}) do not depend of $b$), but in our case the domain is bounded ($\theta\in(\beta_1, \ \beta_2)$) . Furthermore, the operators involved in (\ref{Ch3enunprobpolar}) are linear in $V$, whereas in \cite{DaiYi}, the equations contains a nonlinear term.

Next, we characterize the buying and selling regions in terms of the polar coordinates.
First, let us observe that
\begin{equation}\label{Ch3formularelacionpolarorig}
\begin{aligned}
v(z,t) &=-\left(\frac{V_{\theta}(\theta,t)\sin^2(\theta)-\gamma\sin(\theta)\cos(\theta)V(\theta,t)}{\gamma V(\theta,t)}\right),  \\
z &=\cot(\theta),
\end{aligned}
\end{equation}
where $v(z,t)$ is the function defined in (\ref{Checamborig}).

Let us define the functions $\text{BR}_F$ and $\text{SR}_F$ by
\begin{equation*}
\text{BR}^c_F(t)=\cot\left(\text{BR}_F(t)\right), \quad \text{SR}^c_F(t)=\cot\left(\text{SR}_F(t)\right), \quad t\in [0,T],
\end{equation*}
where $\text{BR}^c_F$ and $\text{SR}^c_F$ are the boundaries of the buying and selling regions in cartesian coordinates defined in (\ref{Ch3FunfronC}). The following proposition is an immediate consequence of the results in \cite{DaiYi}.
\begin{proposition}\label{Ch3tradpropertapolar}
Functions $\text{SR}_F$, $\text{BR}_F$ are monotonically decreasing functions.

It holds that $\text{BR}_F(t)<\text{SR}_F(t)$ and that
\begin{equation*}
\text{BR}_F(t)=0, \quad t\in[\hat{t}_0, \ T],\quad \hat{t}_0=T-\frac{1}{\alpha-r}\log\frac{1+\lambda}{1-\mu}.
\end{equation*}

If $\alpha-r-(1-\gamma)\sigma^2>0$, then $\text{BR}_F(\hat{t}_1)=\frac{\pi}{2}$, with
$$\hat{t}_1=T-\frac{1}{\alpha-r-(1-\gamma)\sigma^2}\log\frac{1+\lambda}{1-\mu}.$$

It holds that $\underset{t\rightarrow T}{\lim} \cot\left(\text{SR}_F(t)\right)=(1-\mu)x_{M}$, where
$$x_M=-\frac{\alpha-r-(1-\gamma)\sigma^2}{\alpha-r}$$
is the  Merton line.

If $T\rightarrow\infty$, there exist two values $\text{BR}_s, \text{SR}_s \in (\beta_1, \beta_2)$, such that
\begin{equation*}
\begin{aligned}
\underset{t\rightarrow 0^{+}}{\lim} \text{BR}_F(t) &= \text{BR}_s, \\
\underset{t\rightarrow 0^{+}}{\lim} \text{SR}_F(t) &= \text{SR}_s.
\end{aligned}
\end{equation*}
 The limit values $\text{BR}_s$ and $\text{SR}_s$ are defined by
 $$\cot(\text{BR}_s)=-\frac{a}{a+\frac{k}{k-1}}(1+\lambda),\quad \cot(\text{SR}_s)=-\frac{a}{a+k}(1-\mu),$$
where $a$ and $k$ are the constants defined in \cite[Theorem 6.1]{DaiYi}.

The functions $\text{SR}_F$, $\text{BR}_F$ satisfy (see also \cite[Proposition 3.4.2]{Gaton}),
\begin{equation*}
\beta_1<0\leq \text{BR}(t) \leq \text{SR}(t) \leq  \text{SR}_s<\beta_2, \quad t\in[0,T].
\end{equation*}
\end{proposition}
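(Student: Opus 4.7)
The plan is to derive each claim directly from its cartesian counterpart in \cite{DaiYi} via the change of variable $z=\cot(\theta)$. On the branch $\theta\in(0,\pi)$ the map $\theta\mapsto\cot(\theta)$ is a strictly decreasing continuous bijection onto $\mathbb{R}$; restricted to the angular solvency interval it sends $(0,\beta_2)$ onto $(-(1-\mu),+\infty)$, which is precisely the cartesian domain used in \cite{DaiYi}. Hence the polar frontiers are the images of the cartesian ones under the strictly decreasing inverse $\text{arccot}$, and every order-theoretic or limit statement transfers directly once its direction has been reversed.

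First I would transfer monotonicity and ordering: \cite{DaiYi} shows that $\text{BR}^c_F$ and $\text{SR}^c_F$ are monotonically increasing with $\text{BR}^c_F(t)>\text{SR}^c_F(t)$; composing with $\text{arccot}$ gives the monotonic decrease of $\text{BR}_F,\text{SR}_F$ and the reversed inequality $\text{BR}_F(t)<\text{SR}_F(t)$. Next I would transfer the critical-time identities: the fact that $\text{BR}^c_F$ leaves the solvency region for $t\in[\hat{t}_0,T]$ (i.e.\ takes the value $+\infty$) becomes $\text{BR}_F(t)=0$ via the continuous extension $\text{arccot}(+\infty)=0$, and the vanishing $\text{BR}^c_F(\hat{t}_1)=0$ (under the hypothesis $\alpha-r-(1-\gamma)\sigma^2>0$) becomes $\text{BR}_F(\hat{t}_1)=\pi/2$. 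The terminal limit $\lim_{t\to T}\text{SR}^c_F(t)=(1-\mu)x_M$ is then literally the defining relation for $\cot(\text{SR}_F(t))$ in the limit; and the infinite-horizon limits of \cite[Theorem 6.1]{DaiYi}, expressed through the constants $a$ and $k$, give exactly the closed-form expressions for $\cot(\text{BR}_s)$ and $\cot(\text{SR}_s)$ stated in the proposition, which again circumvents any branch choice.

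The containment $\beta_1<0\le\text{BR}_F(t)\le\text{SR}_F(t)\le\text{SR}_s<\beta_2$ then follows from the decreasing monotonicity of both frontiers combined with the endpoint values already obtained: $\text{SR}_F$ is bounded above by its limit $\text{SR}_s$ as $t\to 0^+$, which is strictly less than $\beta_2$ by the explicit formula; $\text{BR}_F$ is nonnegative because it equals $0$ on $[\hat{t}_0,T]$ and its earlier values are bounded below by $\text{BR}_s\ge 0$. The main obstacle I anticipate is purely bookkeeping: matching the signs, reversing inequalities, and correctly interpreting the compactified boundary value $+\infty$ of $\text{BR}^c_F$ as $\text{BR}_F=0$. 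Once this dictionary is set up, each assertion is essentially a one-line transcription of a corresponding result in \cite{DaiYi}.
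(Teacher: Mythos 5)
Your proposal is correct and follows essentially the same route as the paper, which offers no written proof beyond declaring the proposition ``an immediate consequence of the results in \cite{DaiYi}'' (with the final chain of inequalities referred to \cite[Proposition 3.4.2]{Gaton}): every claim is the image of the corresponding cartesian statement of \cite{DaiYi} under the strictly decreasing map $\theta\mapsto\cot(\theta)$, with $\text{arccot}(+\infty)=0$ handling the degenerate buying boundary. Your transcription dictionary (reversed monotonicity and inequalities, $\cot(\pi/2)=0$, the stationary constants $a,k$ from \cite[Theorem 6.1]{DaiYi}) is exactly what the authors implicitly rely on, so nothing further is needed.
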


It is now easy to see that, for $t\in [0,T]$, the buying, selling and no transaction region can be described by as follows:

\noindent \textbf{1.} The buying region is defined by $ \text{BR}=(\beta_1,\ \text{BR}_F(t)]$. In $\text{BR}$ the value function satisfies
\begin{equation}\label{Chp3BRexpformula}
V_{\theta}= \gamma\frac{(1+\lambda)\cos(\theta)-\sin(\theta)}{(1+\lambda)\sin(\theta)+\cos(\theta)}V,
\end{equation}
\noindent \textbf{2.} The Selling region is defined by $\text{SR}=[\text{SR}_F(t), \beta_2)$. In $\text{SR}$ the value function satisfies
\begin{equation}\label{Chp3SRexpformula}
V_{\theta}= \gamma\frac{(1-\mu)\cos(\theta)-\sin(\theta)}{(1-\mu)\sin(\theta)+\cos(\theta)}V.
\end{equation}
\noindent \textbf{3.} The No Transaction Region is defined by $\text{NT}=(\text{BR}_F(t), \text{SR}_F(t))$.
In $\text{NT}$, $V$ satisfies of the following partial differential equations
\begin{equation}\label{Notranstaction}
V_t+g_2(\theta)V_{\theta \theta}+g_1(\theta)V_{\theta}+g_0(\theta) V=0.
\end{equation}

We remark that if the buying ($\text{BR}_F(t)$) and Selling ($\text{SR}_F(t)$) frontiers are known, we can compute the value function $V(\theta,t)$ in $\text{BR}$ and $\text{SR}$ explicitly by a simple integration of equations (\ref{Chp3BRexpformula}) and (\ref{Chp3SRexpformula}) respectively. For $\beta_1<\theta<\text{BR}_F(t)$, we have
\begin{equation}\label{Ch3exfsrbr1}
V(\theta,t)=V(\text{BR}_F(t),t)
\left(\frac{(1+\lambda)\sin(\theta)+\cos(\theta)}{(1+\lambda)\sin(\text{BR}_F(t))+\cos(\text{BR}_F(t))}\right)^{\gamma},
\end{equation}
and for $\text{SR}_F(t)<\theta<\beta_2$,
\begin{equation}\label{Ch3exfsrbr2}
V(\theta,t)=V(\text{SR}_F(t),t)
\left(\frac{(1-\mu)\sin(\theta)+\cos(\theta)}{(1-\mu)\sin(\text{SR}_F(t))+\cos(\text{SR}_F(t))}\right)^{\gamma}.
\end{equation}

\begin{figure}[h]
\centering
\includegraphics[width=12cm,height=5cm]{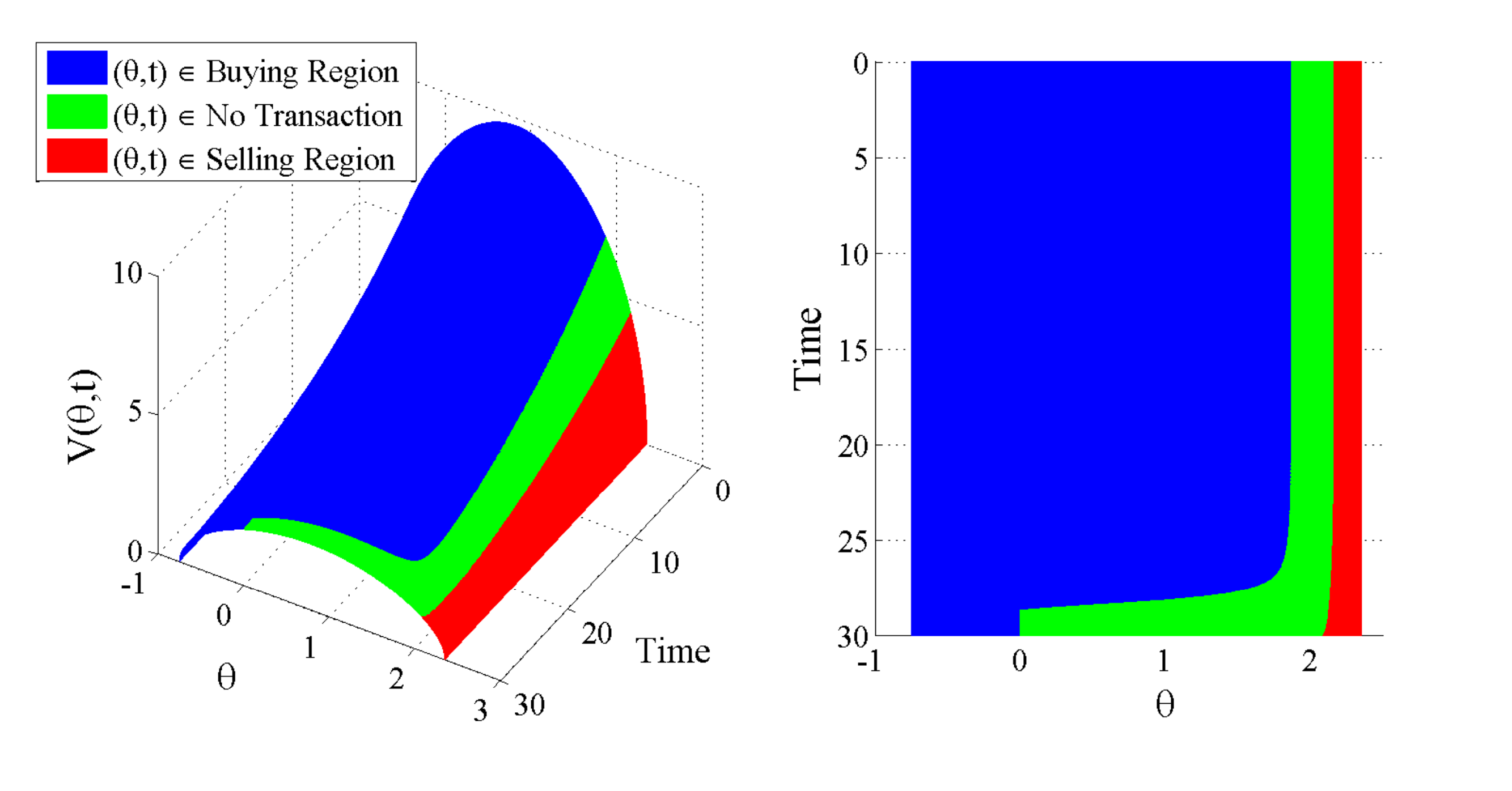}
\caption{\label{Ch3stationaryCN2} Value function (numerical solution) for $(\theta,t)\in[\beta_1,\beta_2]\times[0,30]$. The colour code is blue if $(\theta,t)$ is in the buying region, green in the no transactions region  and red in the selling region.}
\end{figure}

Figure \ref{Ch3stationaryCN2} represents the value function in perspective (left) and from above (right) in $[\beta_1, \beta_2]$,  for a maturity of $T=30$ years. The Figure shows the numerical values obtained for the function $V(\theta,t)$ with the method described in Section \ref{Ch3NM}. We have coloured the function depending in whether $(\theta,t)$ is in the Buying, Selling or No Transactions region. We can visually check the expected monotonicity of the Buying and Selling frontiers studying from above (right) the two curves which divide the different colours (red-green and green-blue). The Buying frontier remains constant ($\text{BR}_F(t)=0$) for a certain period near maturity and the stationarity value of both frontiers as we move away from maturity is also observable.

\section{Numerical Method}\label{Ch3NM}

The numerical method described in this section is constructed upon the following strategy.

Let $\pi\in A_{0}(x,y)$ denote an admissible trading strategy where $x$ and $y$ are the amount of money in the bank and stock accounts at $t=0$.

Let $\alpha_1\in(\beta_1,\text{BR}_s)$ and $\alpha_2\in(\text{SR}_s,\beta_2)$ and
define
\begin{equation}
A^{\alpha_1, \alpha_2}_0(x,y)=\left\{\pi\in A_0(x,y) \ \mid \ \text{arccot}\left({x^\pi}/{y^\pi}\right)\in (\alpha_1, \alpha_2)\right\}
\end{equation}
where  $x^\pi$, $y^{\pi}$ are the amounts in the bank and stock accounts if strategy $\pi$ is followed.

 Proposition~\ref{Ch3tradpropertapolar}  implies that $\pi^{o}\in A^{\alpha_1, \alpha_2}_0(x,y)$ where $\pi^{o}$ denotes the optimal trading strategy solving (\ref{Ch3defvarpphi}).
Therefore, the optimal value function can be computed as the solution of (\ref{Ch3enunprobpolar})-(\ref{Ch3enunprobpolar2}) in $(\alpha_1,\alpha_2)\times[0,T]$ subject to the boundary conditions:
\begin{equation}\label{Ch3theorboundcond}
\begin{aligned}
V_{\theta}\left(\alpha_1,t\right)&=V\left(\alpha_1,t\right)\gamma \frac{(1+\lambda)\cos(\alpha_1)-\sin(\alpha_1)}{(1+\lambda)\sin(\alpha_1)+\cos(\alpha_1)}, \\
V_{\theta}\left(\alpha_2,t\right)&=V\left(\alpha_2,t\right)\gamma\frac{(1-\mu)\cos(\alpha_2)-\sin(\alpha_2)}{(1-\mu)\sin(\alpha_2)+\cos(\alpha_2)}.
\end{aligned}
\end{equation}
These conditions are equivalent to a mandatory buying or selling the stock if $\theta$ reaches $\alpha_1$ or $\alpha_2$ respectively (see formulas (\ref{Chp3BRexpformula})-(\ref{Chp3SRexpformula})).

The solution can be extended to $(\beta_1,\beta_2)\times[0,T]$ taking into account that for $t\in[0,T]$:
$(\beta_1,\alpha_1)\subset \text{BR}$ and
$(\alpha_2,\beta_2)\subset \text{SR}$,
so that we can compute $V(\theta,t)$ with (\ref{Ch3exfsrbr1}) in $\text{BR}$ and (\ref{Ch3exfsrbr2}) in  $\text{SR}$.

\subsection{The adaptive mesh}\label{Ch3MACCMAM}
Let $N_t$ be a nonnegative integer and let us define the time mesh $\left\{t_l\right\}_{l=0}^{N_t}$ by
\begin{equation}\label{Ch3timediscret}
t_l=l\Delta t, \quad l=0,1,...,N_t, \quad \Delta t = \frac{T}{N_t}.
\end{equation}
The spatial mesh will depend on the time step. The main idea is to adapt the mesh in such a way that it evolves through time following the approximate location of the buying and selling frontiers (i.e. evolving as the green zone in Figure \ref{Ch3stationaryCN2}). To this end, let $\delta\in(0, 1/2)$ be a control parameter. We define
\begin{equation}\label{Ch3paramk}
\begin{aligned}
k_1 &=\frac{\beta_2-\text{SR}_s}{\beta_2-\text{BR}_F(T)}, \\
K & =\min\{\delta, k_1, \ \text{BR}_F(T)-\beta_1\},
\end{aligned}
\end{equation}
where $\text{SR}_s=\text{arccot}\left(\text{SR}^c_s\right)\in[\beta_1,\beta_2]$ is the stationary state of the Selling frontier (see Proposition~\ref{Ch3tradpropertapolar}).

For $N\in \mathbb{N}$, let us consider the $N+1$ Chebyshev nodes in $[-1, 1]$,
\begin{equation}\label{Ch3chebynodes}
\tilde{\theta}_j=\cos\left(\frac{\pi j}{N}\right), \quad j=0,1,...,N.
\end{equation}

We define the integer $j_K\in\{0,1,2,...,N\}$ as the unique integer such that
\begin{equation}\label{Ch3consjk}
\left|\tilde{\theta}_{N-j_K}-\tilde{\theta}_{N}\right|\leq 2K < \left|\tilde{\theta}_{N-(j_{K}+1)}-\tilde{\theta}_{N}\right|.
\end{equation}
Note that $j_K$  is well defined because $0<K\leq\delta<1/2$. From the definition of the Chebyshev nodes, it is easy to check that it exists $N_0$ such that for all $N\geq N_0$,  $j_K\geq 1$.

Let us suppose that at time $t=t_l$, we know the locations of the buying and selling frontiers, $BR_F(t_l)$ and $SR_F(t_l)$. Given $N_{\theta}\in\mathbb{N}$, $N_{\theta}>N_0$ we define an interval $I(t_l)$ by:
\begin{equation}\label{Ch3forint1}
I(t_l)=\left[0, \ \frac{2}{\tilde{\theta}_{j_K}-\tilde{\theta}_{N_{\theta}}}\text{SR}_F(t_l) \right],
\end{equation}
if  $\text{BR}_F(t_l)=0$, or, in case $\text{BR}_F(t_l)>0$,
\begin{equation}\label{Ch3forint2}
I(t_l)=\left[\text{BR}_F(t_l)-M\frac{\tilde{\theta}_{N_{\theta}-j_K}-\tilde{\theta}_{N_{\theta}}}{\tilde{\theta}_{j_K}-\tilde{\theta}_{N_{\theta}-j_K}}, \ SR_F(t_l)+M\frac{\tilde{\theta}_0-\tilde{\theta}_{j_K}}{\tilde{\theta}_{j_K}-\tilde{\theta}_{N_{\theta}-j_K}} \right].
\end{equation}
Here $M=\text{SR}_F(t_l)-\text{BR}_F(t_l)$. We remark that, with this definition, the interval $I(t_l)$ always contains the no transaction region
$[\text{BR}_F(t_l),\text{SR}_F(t_l)]$ and it is contained in the solvency region $[\beta_1,\beta_2]$. Furthermore, $\text{BR}_F(t_l)$ and $\text{SR}_F(t_l)$ are always one of the $N_{\theta}+1$ Chebyshev nodes in the interval $I(t_l)$, while restriction $N_{\theta}>N_0$ implies that $\text{SR}_F(t_l)$ is an interior point of $I(t_l)$. The control parameter $\delta$ guaranties that a maximum of $100\delta$\% of interval $I(t_l)$ is contained in the Selling Region, another maximum of $100\delta$\% of $I(t_l)$ in the Buying Region, whereas a minimum $100(1-2\delta)$\% of $I(t_l)$ is contained in the No Transactions Region.

Note also that for $t_{N_t}=T$, the values $\text{BR}_F(T)=0$ and $\text{SR}_F(T)$ are, of course, known data (see Proposition \ref{Ch3tradpropertapolar}), whereas for $t_l<T$, we have to substitute $BR_F(t_l)$ and $SR_F(t_l)$ by some approximation that we will denote $BR^{\textbf{N}}_F(t_l)$ and $SR^{\textbf{N}}_F(t_l)$ where $\textbf{N}=(N_{\theta},N_t)$. We will describe in Subsection \ref{Ch3MACCMCC} how to compute them prior to the construction of the interval $I(t_l)$.

Next proposition proves that, for $N_t$  big enough, $\text{BR}_F(t_{l-1}), \text{SR}_F(t_{l-1})\in I(t_l)$, so that we can compute recursively the intervals $I(t_j)$ for $j=N_t,N_t-1,\dots, 0$.

\begin{proposition}\label{Ch3propoinclusnteninter}

Let $NT(t)=[BR_F(t), \ SR_F(t)]$ where $BR_F(t)$ and $SR_F(t)$ are the exact location of the Buying and Selling frontiers.

For any $N_{\theta}>N_0$, where $N_0$ is the restriction which guarantees that $\text{SR}_F(t)$ will be in the interior of $I(t)$, compute $I(t)$ with (\ref{Ch3forint1}) or (\ref{Ch3forint2}).

It exists $N_1>0$ such that for any time mesh $\left\{t_l\right\}_{l=0}^{N_t}, \ N_t>N_1$ given by (\ref{Ch3timediscret}), it holds
\begin{equation}
NT(t_{l_0-1}), \ NT(t_{l_0})\subset I(t_{l_0}).
\end{equation}
for any $t_{l_0}\in\left\{t_l\right\}_{l=0}^{N_t}$.
\end{proposition}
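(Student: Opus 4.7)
The strategy is to exploit the monotonicity of $\text{BR}_F$ and $\text{SR}_F$ together with a uniform positive lower bound on the buffer that is built into the construction of $I(t_l)$. First I would note that $NT(t_{l_0}) \subset I(t_{l_0})$ holds by construction, so the only nontrivial inclusion is $NT(t_{l_0-1}) \subset I(t_{l_0})$. Writing $[\ell(t_{l_0}), r(t_{l_0})]$ for the endpoints of $I(t_{l_0})$, the monotone decrease of $\text{BR}_F$ and $\text{SR}_F$ asserted in Proposition~\ref{Ch3tradpropertapolar} gives $\text{BR}_F(t_{l_0-1}) \ge \text{BR}_F(t_{l_0}) \ge \ell(t_{l_0})$ and $\text{SR}_F(t_{l_0-1}) \ge \text{SR}_F(t_{l_0})$. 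Hence the left endpoint condition is automatic, and the whole statement reduces to showing the single inequality $\text{SR}_F(t_{l_0-1}) \le r(t_{l_0})$; the bound $\text{BR}_F(t_{l_0-1}) \le \text{SR}_F(t_{l_0-1})$ then handles what remains.

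Next I would quantify the right-hand buffer $r(t_{l_0}) - \text{SR}_F(t_{l_0})$. From (\ref{Ch3forint1})--(\ref{Ch3forint2}) one reads off that this buffer equals either $\frac{1-\tilde{\theta}_{j_K}}{1+\tilde{\theta}_{j_K}}\,\text{SR}_F(t_{l_0})$ (when $\text{BR}_F(t_{l_0})=0$) or $c_R\,M(t_{l_0})$ with $c_R = (\tilde{\theta}_0-\tilde{\theta}_{j_K})/(\tilde{\theta}_{j_K}-\tilde{\theta}_{N_\theta-j_K}) > 0$ (when $\text{BR}_F(t_{l_0})>0$). Because $N_\theta > N_0$ forces $j_K \ge 1$, both factors $c_R$ and $(1-\tilde{\theta}_{j_K})/(1+\tilde{\theta}_{j_K})$ are strictly positive constants depending only on $N_\theta$ and $K$. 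Moreover, Proposition~\ref{Ch3tradpropertapolar} together with the strict inequality $\text{BR}_F(t) < \text{SR}_F(t)$ on the compact interval $[0,T]$ yield positive lower bounds $\min_{t\in[0,T]} M(t) \ge m_0 > 0$ and $\min_{t\in[0,T]}\text{SR}_F(t) \ge \text{SR}_F(T) > 0$ (the latter because $\text{SR}_F \ge \text{BR}_F \ge 0$ and the inequality is strict). Combining these two observations I obtain a uniform lower bound
\[
B := \inf_{t \in [0,T]} \bigl( r(t) - \text{SR}_F(t) \bigr) > 0,
\]
which depends on $N_\theta$ and on the fixed frontier data but not on $l_0$ or $N_t$.

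The last step is a standard uniform-continuity argument. Since $\text{SR}_F$ is continuous on the compact interval $[0,T]$ by Proposition~\ref{Ch3tradpropertapolar}, it is uniformly continuous there, so there exists $\eta > 0$ such that $|\text{SR}_F(t) - \text{SR}_F(s)| < B$ whenever $|t-s| < \eta$. Taking $N_1 := \lceil T/\eta \rceil$, every $N_t > N_1$ yields $\Delta t < \eta$ and hence
\[
\text{SR}_F(t_{l_0-1}) \;\le\; \text{SR}_F(t_{l_0}) + B \;\le\; r(t_{l_0}),
\]
which is the desired inequality for each $l_0$.

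The main obstacle I anticipate is the case analysis between $\text{BR}_F(t_{l_0})=0$ and $\text{BR}_F(t_{l_0})>0$, together with carefully checking that the buffer constant coming from the Chebyshev node distribution in (\ref{Ch3consjk}) is genuinely positive and independent of $l_0$; once both cases are handled uniformly, the compactness-plus-continuity argument is routine.
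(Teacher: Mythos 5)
Your proof is correct and follows essentially the same route as the paper: handle the left endpoint via the monotone decrease of $\text{BR}_F$ and the bound $\text{BR}_F\le \text{SR}_F$, and handle the right endpoint by combining the positive buffer built into $I(t)$ (interiority of $\text{SR}_F$) with the regularity of $\text{SR}_F$ to get a uniform admissible time step. The only difference is that you make explicit the uniform lower bound on the buffer and the uniform-continuity choice of $N_1$, which the paper's proof asserts more tersely via the existence of $\Delta t_k$.
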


\begin{proof}

From \cite{DaiYi}, we know that $\text{SR}^c_F(t)\in \mathscr{C}^{\infty}[0,T)$. Therefore, $\text{SR}_F(t)=\text{arccot}\left(\text{SR}^c_F(t)\right)\in(\beta_1, \ \beta_2)$ is $\mathscr{C}^{\infty}[0,T)$.

Let $k$ from (\ref{Ch3paramk}) be fixed. Since $\text{SR}_F(t)$ is in the interior of $I(t)$, it will exist $\Delta t_{k}$ such that for all $ \Delta t<\Delta t_{k}$:
\begin{equation}\label{Ch3condicionencontrarfron}
SR_F(t-\Delta t)\in I(t), \quad t \in [0,  T).
\end{equation}

This guarantees that for any equally spaced time mesh  $\{t_l\}_{l=0}^{N_t}$, with $N_t>1/\Delta t_{k}$, $SR_F(t_{l-1})\in I(t_l)$.

To finish the proof, note that from Proposition \ref{Ch3tradpropertapolar}, $\text{BR}_F(t_{l})\leq\text{BR}_F(t_{l-1})$ and that $\text{BR}_F(t_{l-1})\leq \text{SR}_F(t_{l-1})$, so the result follows directly from the definition of $I(t_l)$.
\end{proof}

\subsection{Chebyshev collocation Method.}\label{Ch3MACCMCC}
Let us suppose that we know an approximation of the function value $V^{\textbf{N}}(\theta,t_{l})$, $\theta\in (\beta_1, \beta_2)$ and approximate values of $\text{BR}^{\textbf{N}}_F(t_l)$ and $\text{SR}^{\textbf{N}}_F(t_l)$ at time $t=t_l$. For $N_{\theta}$ big enough (Proposition~\ref{Ch3propoinclusnteninter}), we can compute $I(t_l)=[\alpha^{t_l}_1, \alpha^{t_l}_2]$ defined as in (\ref{Ch3forint1}) if $\text{BR}^{\textbf{N}}_F(t_l)=0$, or with (\ref{Ch3forint2}) otherwise.

For $t\in[t_{l-1},t_{l}]$, we define the function $\hat{V}$ as the function value which gives the expected terminal value when the trading strategy is to perform no transactions if $\theta\in(\alpha^{t_l}_1, \alpha^{t_l}_2)$, to buy the stock if $\theta=\alpha^{t_l}_1$ and to sell the stock if $\theta=\alpha^{t_l}_2$, subject to $\hat{V}(\theta,t_{l})=V^{\textbf{N}}(\theta,t_{l})$.

Therefore, $\hat{V}$ is the solution of the equation
\begin{equation}\label{Ch3ecudifnotranbis}
-\hat{V}_t+g_2(\theta)\hat{V}_{\theta \theta}+g_1(\theta)\hat{V}_{\theta}+g_0(\theta) \hat{V}=0,
\end{equation}
subject to
\begin{equation}\label{Ch3ecudifnotranboundcondbis}
\begin{aligned}
\hat{V}_{\theta}\left(\alpha^{t_l}_1,t\right)&=\hat{V}\left(\alpha^{t_l}_1,t\right)\gamma \frac{(1+\lambda)\cos(\alpha^{t_l}_1)-\sin(\alpha^{t_l}_1)}{(1+\lambda)\sin(\alpha^{t_l}_1)+\cos(\alpha^{t_l}_1)}, \\
\hat{V}_{\theta}\left(\alpha^{t_l}_2,t\right)&
=\hat{V}\left(\alpha^{t_l}_2,t\right)\gamma\frac{(1-\mu)\cos(\alpha^{t_l}_2)
-\sin(\alpha^{t_l}_2)}{(1-\mu)\sin(\alpha^{t_l}_2)+\cos(\alpha^{t_l}_2)}, \\
\hat{V}(\theta,t_l)&=V^{\textbf{N}}(\theta,t_{l}).
\end{aligned}
\end{equation}

Let us consider the $N_{\theta}+1$ Chebyshev nodes in $I(t_l)$
\begin{equation}\label{Ch3cambvariable}
\theta_j=\frac{\alpha^{t_l}_2-\alpha^{t_l}_1}{2}\tilde{\theta}_j+\frac{\alpha^{t_l}_2+\alpha^{t_l}_1}{2}, \ j=0,1,...,N_{\theta},
\end{equation}
where $\tilde{\theta}_j$ are the Chebyshev points (\ref{Ch3chebynodes}).

The numerical approximation $\hat{V}^{\textbf{N}}(\theta,t_{l-1})$, $\theta\in(\theta_{N_{\theta}},\theta_0)$ to the function $\hat{V}$ is the collocation polynomial \cite{Canuto} of degree $N_{\theta}$ defined for $j=1,...,N_{\theta}-1$ by:
\begin{equation}\label{Ch3collocationmethod}
\frac{\hat{V}^{\textbf{N}}(\theta_j,t_{l-1})-\hat{V}^{\textbf{N}}(\theta_j,t_l)}{\Delta t}
=L\left(\frac{\hat{V}^{\textbf{N}}(\theta_j,t_{l-1})+\hat{V}^{\textbf{N}}(\theta_j,t_l)}{2}\right).
\end{equation}
subject to
\begin{equation}\label{Ch3valorcondvencheby}
\hat{V}^{\textbf{N}}(\theta_j,t_l)=V^{\textbf{N}}(\theta_j,t_l), \quad j=1,2,...,N_{\theta}-1,
\end{equation}
with (Neumann) boundary conditions
\begin{equation}\label{Ch3collocationmethodcons}
\begin{aligned}
\hat{V}^{\textbf{N}}_{\theta}(\theta_{N_{\theta}},t_{l-1}) &=V^{\textbf{N}}(\theta_{N_{\theta}},t_{l})\gamma \frac{(1+\lambda)\cos(\theta_{N_{\theta}})
-\sin(\theta_{N_{\theta}})}{(1+\lambda)\sin(\theta_{N_{\theta}})+\cos(\theta_{N_{\theta}})}, \\
\hat{V}^{\textbf{N}}_{\theta}(\theta_0,t_{l-1}) &=V^{\textbf{N}}(\theta_0,t_l)\gamma\frac{(1-\mu)\cos(\theta_0)
-\sin(\theta_0)}{(1-\mu)\sin(\theta_0)+\cos(\theta_0)}.
\end{aligned}
\end{equation}
where
\begin{equation}\label{Ch3operatorl}
L\bigl(\hat{V}^{\textbf{N}}(\theta)\bigr)=
g_2(\theta)\frac{\partial^2 \hat{V}^{\textbf{N}}}{\partial \theta^2}+ g_1(\theta)\frac{\partial \hat{V}^{\textbf{N}}}{\partial \theta}+g_0(\theta) \hat{V}^{\textbf{N}}.
\end{equation}

The equations (\ref{Ch3collocationmethod})-(\ref{Ch3collocationmethodcons}) define a dense system of linear equations to find the values of $\hat{V}^{\textbf{N}}_{\theta_j}$, $j=0,\dots, N_\theta$.
 However, the fact that, with relative few nodes for the spatial mesh we can achieve a very good precision, makes this method competitive with respect to a finite differences method, see Section~\ref{Ch3numresults}.

 Let us define
\begin{equation}\label{Ch3auxencheby}
\begin{aligned}
P^{(\textbf{N},l-1)}_1(\theta) &= \hat{V}^{\textbf{N}}_{\theta}(\theta,t_{l-1}) - \hat{V}^{\textbf{N}}(\theta,t_{l-1})\cdotp \gamma\frac{(1+\lambda)\cos(\theta)-\sin(\theta)}{(1+\lambda)\sin(\theta)+\cos(\theta)}, \\
P^{(\textbf{N},l-1)}_2(\theta) &=\hat{V}^{\textbf{N}}_{\theta}(\theta,t_{l-1}) -\hat{V}^{\textbf{N}}(\theta,t_l)\cdotp \gamma\frac{(1-\mu)\cos(\theta)-\sin(\theta)}{(1-\mu)\sin(\theta)+\cos(\theta)},
\end{aligned}
\end{equation}
which are explicit functions because $\hat{V}^{\textbf{N}}$ is a known polynomial in $\theta$. In (\ref{Ch3auxencheby}), we compare (see  \cite[Subsection 3.5.3]{Gaton}) whether it is better to not perform transactions or to buy the stock (resp. sell the stock). If  polynomial $P^{(\textbf{N},l-1)}_1>0$ (resp. $P^{(\textbf{N},l-1)}_2>0$) it is better to not perform transactions rather than buy (resp. sell) the stock.

The numerical approximation to the Buying and Selling frontiers is given by:
\begin{equation}\label{frontiers}
\begin{aligned}
\text{BR}^{\textbf{N}}_F(t_{l-1}) &= \min\left\{\beta: P^{(\textbf{N},l-1)}_1(\theta)\geq 0, \theta\in[\beta,\alpha^{t_l}_2) \right\}, \\
\text{SR}^{\textbf{N}}_F(t_{l-1}) &= \max\left\{\beta: P^{(\textbf{N},l-1)}_2(\theta)\geq 0, \theta\in(\alpha^{t_l}_1,\beta] \right\}, \\
\end{aligned}
\end{equation}

Once we know the location of the frontiers and the function value in that points, we can compute the approximate function value through the following explicit formulas where we have used the notation $B_{l}=\text{BR}^{\textbf{N}}_F(t_{l})$, $S_{l}=\text{SR}^{\textbf{N}}_F(t_{l})$
\begin{equation}\label{VN1}
V^{\textbf{N}}(\theta^{t_{l-1}}_j,t_{l-1})
=\hat{V}^{\textbf{N}}(B_{l-1},t_{l-1})
\left[\frac{(1+\lambda)\sin(\theta^{t_{l-1}}_j)
+\cos(\theta^{t_{l-1}}_j)}{(1+\lambda)\sin(B_{l-1})+\cos(B_{l-1})}\right]^{\gamma},
\end{equation}
if $ \theta^{t_{l-1}}_j<B_{t_{l-1}}$,
\begin{equation}\label{VN2}
V^{\textbf{N}}(\theta^{t_{l-1}}_j,t_{l-1})=\hat{V}^{\textbf{N}}(\theta^{t_{l-1}}_j,t_{l-1}), \quad
B_{l-1}\leq\theta^{t_{l-1}}_j\leq S_{l-1},
\end{equation}
\begin{equation}\label{VN3}
V^{\textbf{N}}(\theta^{t_{l-1}}_j,t_{l-1})=
\hat{V}^{\textbf{N}}(S_{t_{l-1}},t_{l-1})
\left[\frac{(1-\mu)\sin(\theta^{t_{l-1}}_j)
+\cos(\theta^{t_{l-1}}_j)}{(1-\mu)\sin(S_{l-1})+\cos(S_{l-1})}\right]^{\gamma},
\end{equation}
if $\theta^{t_{l-1}}_j>S_{l-1}$.

Then the complete algorithm reads as follows:
\begin{itemize}
\item[\textbf{Step 0}] Fix a number $N_t$ and a number $N_{\theta}$ big enough such that Proposition~\ref{Ch3propoinclusnteninter} holds.

Compute $\Delta t = \frac{T}{N_t}$ and $\{t_l\}_{l=0}^{N_t}$ as in (\ref{Ch3timediscret}). Define $\textbf{N}=(N_{\theta},N_t)$.

Set $l=N_{t}$ and  compute $I(t_{N_t})$ with formula (\ref{Ch3forint1})

Compute $V^{\textbf{N}}(\theta,T)$, $\theta\in I(t_{N_t})$, as the Chebyshev interpolation polynomial in $\{\theta^{T}_j\}_{j=0}^{N_{\theta}}$ of function $V(\theta,T)$, given by (\ref{Ch3enunprobpolar2}), where
$\{\theta^{T}_j\}_{j=0}^{N_{\theta}}$ denote the Chebyshev nodes in $I(t_{N_t})$.
\item[\textbf{Step 1}] Compute the polynomial $\hat{V}^{\textbf{N}}(\theta,t_{l-1})$ solving the collocation equations (\ref{Ch3collocationmethod}) with final condition (\ref{Ch3valorcondvencheby}) and boundary conditions (\ref{Ch3collocationmethodcons}).
\item[\textbf{Step 2}] Locate the buying and selling frontiers $\text{BR}^{\textbf{N}}_F(t_{l-1})$ and
$\text{SR}^{\textbf{N}}_F(t_{l-1}) $ using (\ref{frontiers}).
\item[\textbf{Step 3}] Compute the interval $I(t_{l-1})$ with (\ref{Ch3forint1}) if $\text{BR}^{\textbf{N}}_F(t_{l-1})=0$ or with (\ref{Ch3forint2})  otherwise.

     Compute
    the numerical approximation $V^{\textbf{N}}$  at time $t_{l-1}$ with formulae (\ref{VN1}), (\ref{VN2}) and (\ref{VN3}).
    \item[\textbf{Step 4}] Set $l=l-1$ and stop if $l=0$ or, otherwise, proceed to \textbf{Step 1}.
\end{itemize}

\begin{remark}\normalfont{
In the algorithm we propose there is an error related to the imposition of Neumann boundary conditions in (\ref{Ch3collocationmethodcons}) instead of the Robin type correct ones. This error can be controlled by the size of the discretization parameters $N_t$ and $N_{\theta}$ because, by definition of the adaptive interval, $\theta_0$ is always inside the Selling Region and $\theta_{N_{\theta}}$ is inside the Buying Region or it is the Buying Frontier. We point that for $t\in[\hat{t}_0, \ T]$,  the lower limit of the interval $I(t)$ is $\alpha_1(t)=0$, which is the Buying Frontier, so that,  it is not inside the Buying Region. Nevertheless, note that when we compute function $\hat{V}$, the boundary condition at $\alpha_1(t)=0, \ t\in[\hat{t}_0,T]$ must be mandatorily to buy the stock, so that
\begin{equation*}
V_{\theta}(0,t)=\underset{\theta\rightarrow 0^-}\lim V_{\theta}(\theta,t)=V(0,t)\gamma \frac{(1+\lambda)\cos(0)-\sin(0)}{(1+\lambda)\sin(0)+\cos(0)}
\end{equation*}
}
\end{remark}

\section{Numerical Results}\label{Ch3numresults}

We consider the parameter values as in the first experiment in \cite{Arregui}. For $t\in[0, 4]$ let:
\begin{equation*}
\sigma=0\ldotp25, \ \ r=0\ldotp03, \ \ \alpha=0\ldotp10, \ \ \gamma=0\ldotp5, \ \ \lambda=0\ldotp08, \ \ \mu=0\ldotp02,
\end{equation*}

The following figure shows the numerical solution $V^{\textbf{N}}(\theta,t)$.

\begin{figure}[h]\label{Ch3v1Valfun4anos}
\centering
\includegraphics[width=12cm,height=5cm]{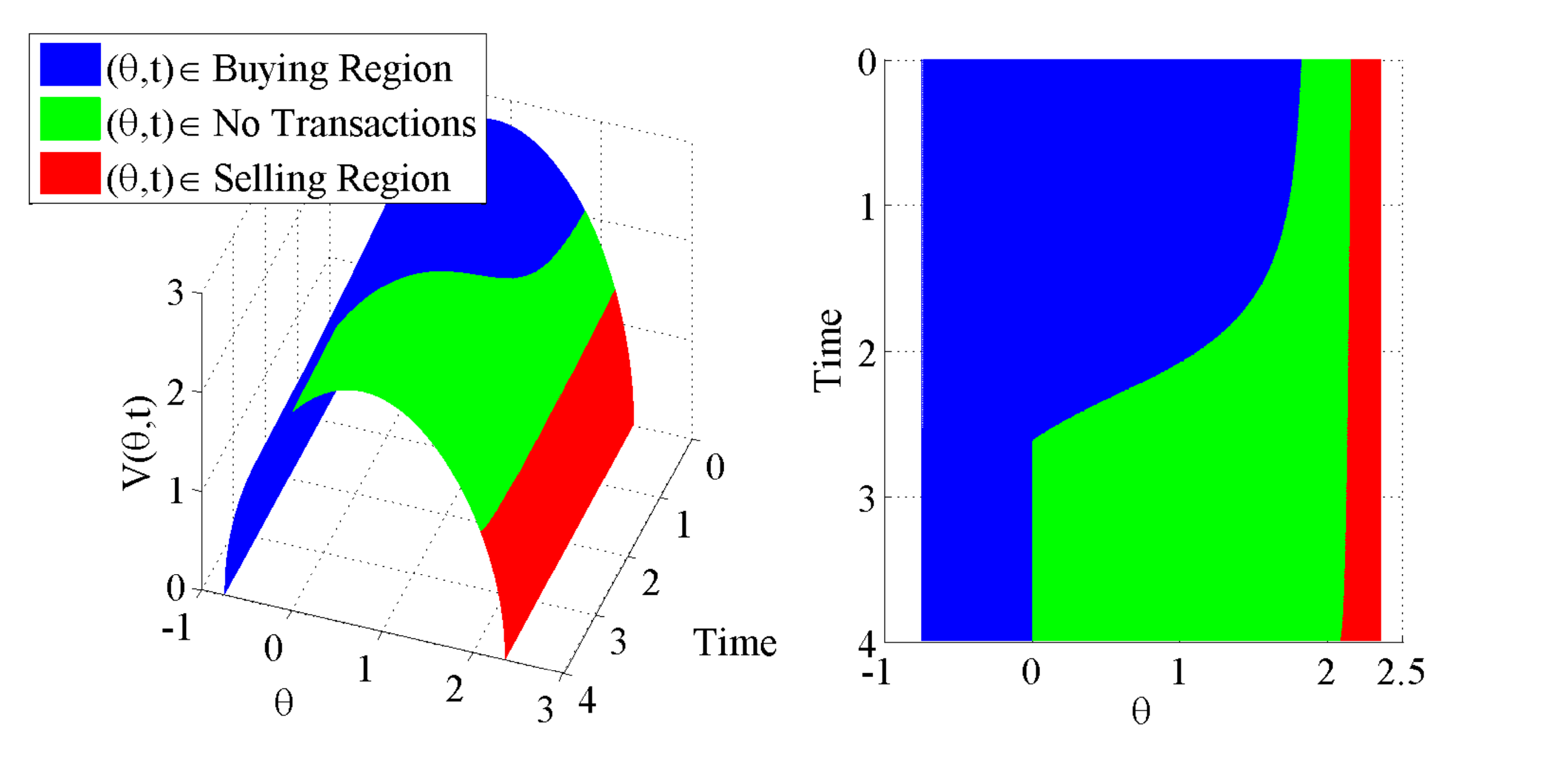}
\caption[Value function in polar coordinates for $ t \in {[0, 4]} $ ]{Value of $V^{\textbf{N}}(\theta,t)$ for $t\in{[0, 4]}$. The colour code is blue if $(\theta,t)$ is in the buying, green in the no transactions and red in the selling region.}
\end{figure}

We have colored the function depending in whether $(\theta,t)$ is in the Buying, Selling or No Transactions region. As in Figure \ref{Ch3stationaryCN2}, we can visually check the properties from Proposition \ref{Ch3tradpropertapolar}.

First, we establish the criteria employed in the experiments to build the spatial mesh. We have fixed the control parameter $\delta=0\ldotp1$, so that, at least $80$\% of the interval corresponds to the No Transactions Region. The particular choice of $\delta$ does not affect the rate of convergence of the error.

In order to compare the performance of the spectral method with other numerical methods, we have also implemented a Central Differences (CD) based method in order to solve the PDE in Step 2 (see Subsection \ref{Ch3MACCMCC}). The formal study of the error will be conducted for the cases where explicit formulas are available, comparing the results of the Central differences and Chebyshev methods. The rest of the properties given in \cite{DaiYi}, although not included, were also checked.

\subsection{Value of the function in $v(0,t)$}\label{Ch3numresultsvalpimed}

We consider $v(z,t)$ defined in (\ref{Checamborig}). For $z=0$,  we can explicitly compute $v(0,t)$ with \cite[(3.9)]{DaiYi}. In Figure \ref{Ch3analiticalsolv0t} we plot the value of $v(0,t)$ for $t\in[0,4]$.

\begin{figure}[h]
\centering
\includegraphics[width=6.5cm,height=5cm]{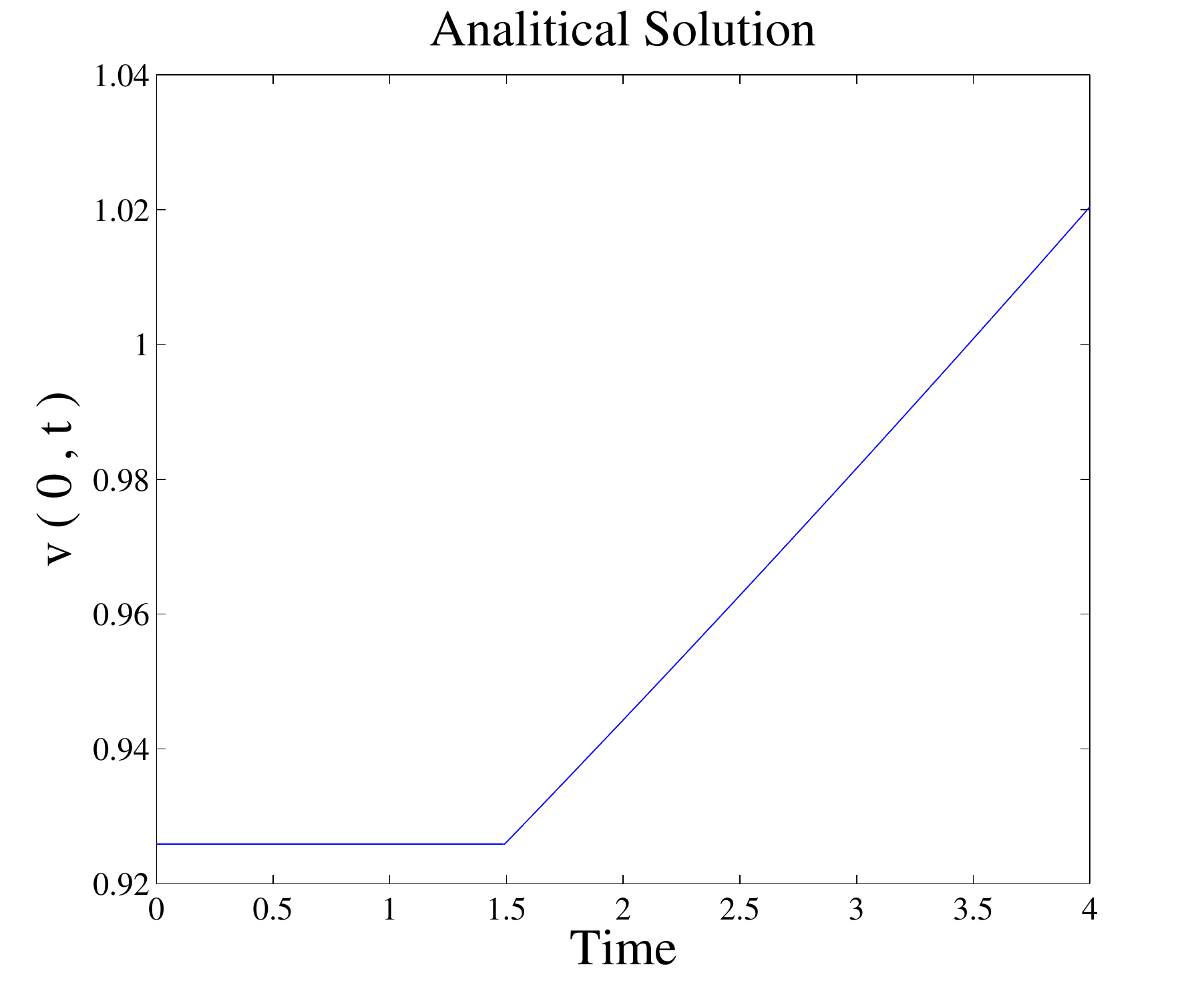}
\caption[Analytical solution of $v(0,t)$]{\label{Ch3analiticalsolv0t} Analytical solution of $v(0,t), \ t\in[0,4]$.}
\end{figure}

The value $z=0$ corresponds in polar coordinates to $\theta=\frac{\pi}{2}$. A numerical solution $v^{\textbf{N}}(0,t_i)$ can be computed explicitly using $V^{\textbf{N}}\left(\frac{\pi}{2},t_i\right)$ and formula (\ref{Ch3formularelacionpolarorig}), which relates the function in polar coordinates and in the original variables.
\begin{equation*}
\begin{aligned}
v^{\textbf{N}}(z,t) &=-\left(\frac{V^{\textbf{N}}_{\theta}(\theta,t)\sin^2(\theta)-\gamma\sin(\theta)\cos(\theta)V^{\textbf{N}}(\theta,t)}{\gamma V^{\textbf{N}}(\theta,t)}\right),  \\
z &=\cot(\theta).
\end{aligned}
\end{equation*}

The following Figure compares the difference between the analytical solution $v(0,t), \ t\in[0,4]$ and the numerical solution obtained with the Chebyshev method for $N_{\theta}=256$ (left) and $N_{\theta}=2048$ (right). Both pictures are in the same scale and we can observe that the error reduces for increasing value of $N_{\theta}$.

\begin{figure}[h]
\centering
\includegraphics[width=14cm,height=5.5cm]{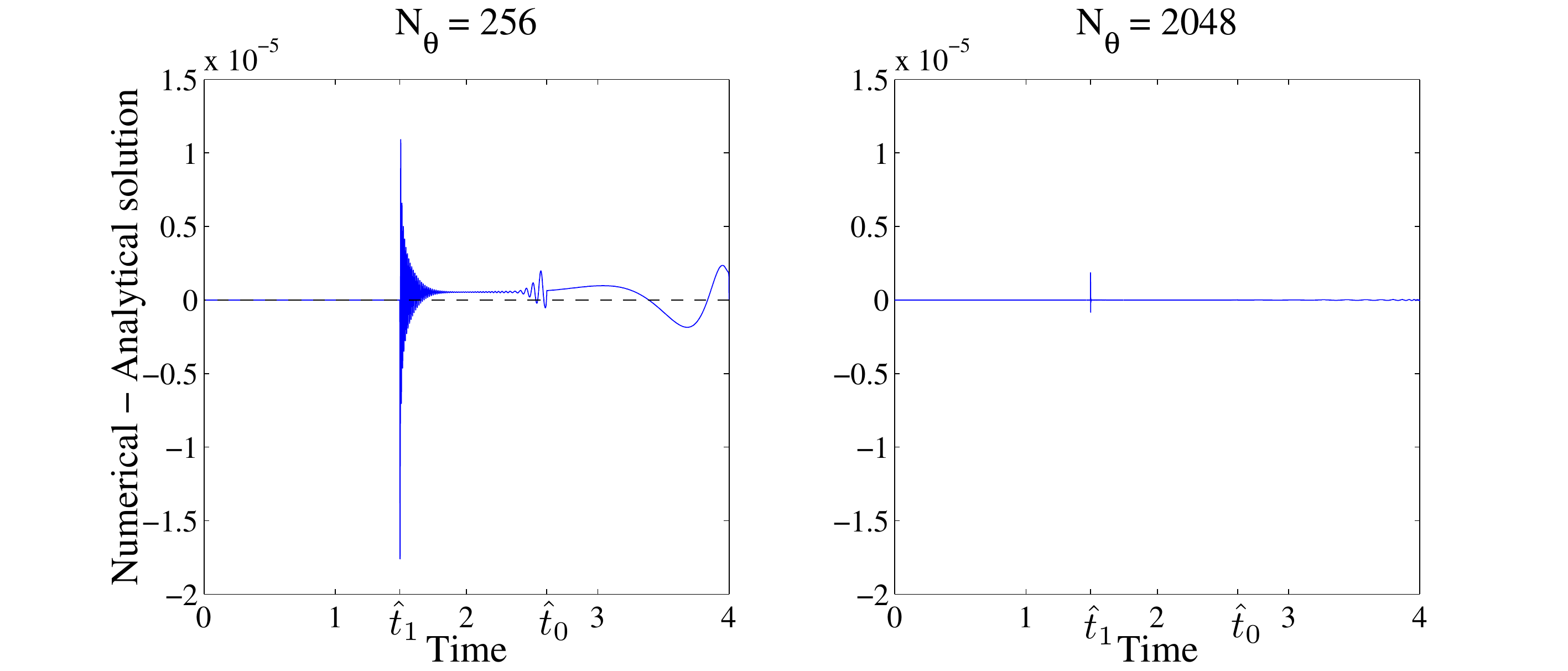}
\caption[Difference between the numerical and analytical solution III]{\label{Ch3valpimedcomcheb} Value $v(0,t)-v^{\textbf{N}}(0,t), \ t\in[0,4]$ where $v^{\textbf{N}}$ was computed with the Chebyshev method with $N_{\theta}=256$ (left) and $N_{\theta}=2048$ (right).}
\end{figure}

In both pictures of Figure \ref{Ch3valpimedcomcheb} we can observe an error discontinuity at time $\hat{t}_1$. From \cite[(3.9)]{DaiYi}, we know that the function $v(0,t)$ is not derivable (respect time) at instant $\hat{t}_1$. The same phenomena can be observed in the numerical experiments in \cite{Arregui}. We can also see that some oscillations appear at time $\hat{t}_0$ where we change the kind of adaptive mesh $I(t_i)$ (see Subsection \ref{Ch3MACCMAM}).

We proceed to check the rate of error convergence. We define the Root of the Mean Square Error as
\begin{equation}\label{Ch3errrorcuadmed}
\text{RMSE}_{\{N_{\theta},N_t\}}\left(v^{\textbf{N}}\right)=\sqrt{\frac{1}{N_t+1}\sum_{l=0}^{N_t} \left(v^{\textbf{N}}(0,t_l)-{v}(0,t_l)\right)^2}.
\end{equation}

Figure \ref{Ch3converrtesptempvalpimed} shows the convergence of spatial error (left) for $\Delta t=3\ldotp9 \cdotp 10^{-4}$ and different number of spatial nodes $N_{\theta}$. The right side shows the convergence of temporal error for $N_{\theta}$ fixed and different values of $N_t$.

\begin{figure}[h]
\centering
\includegraphics[width=12.5cm,height=5.3cm]{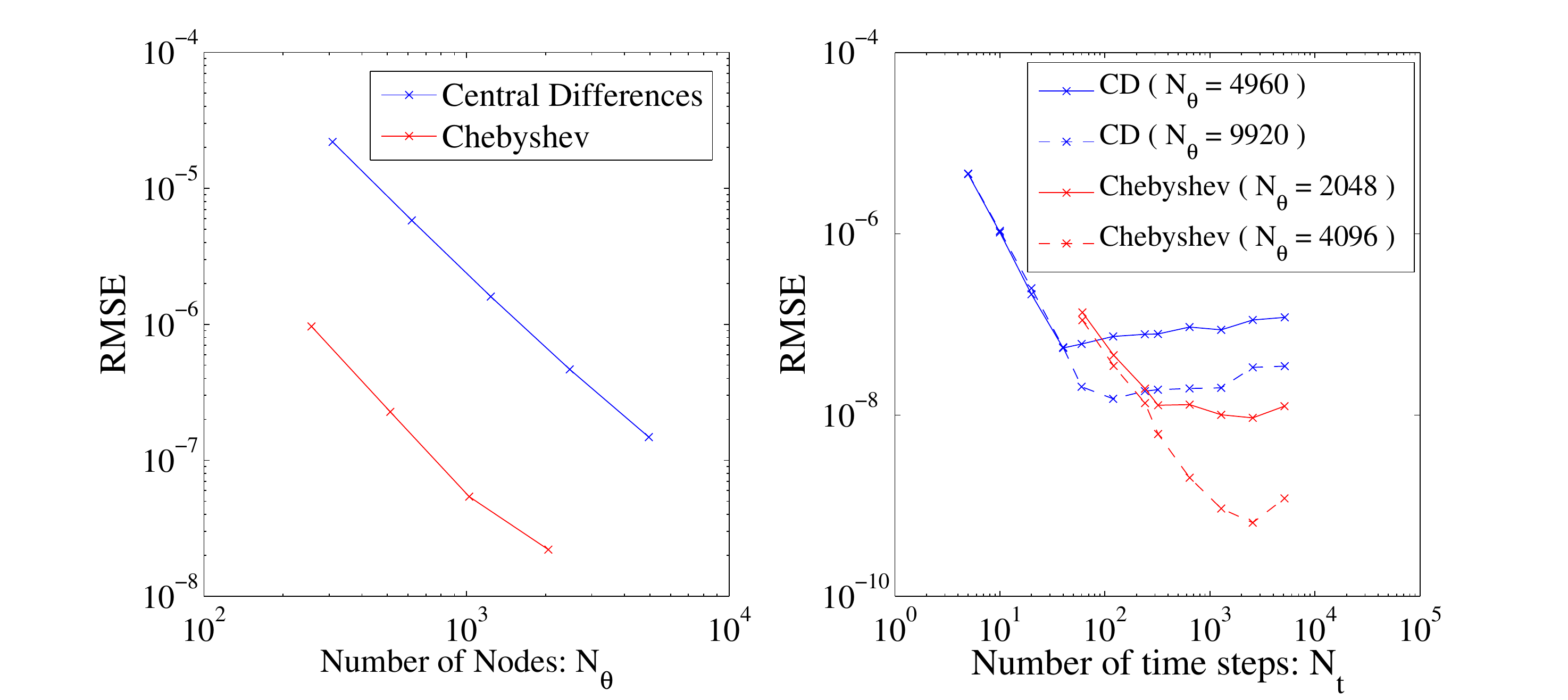}
\caption[Spatial and temporal error convergence valpimed]{\label{Ch3converrtesptempvalpimed} Spatial (left) and Temporal (right) Error convergence of $v^{\textbf{N}}$ in logarithmic scale of the Central Differences (blue) and Chebyshev (red) methods.}
\end{figure}

In the left side of Figure \ref{Ch3converrtesptempvalpimed} we have plotted, in logarithmic scale, the number $N_{\theta}$ of spatial nodes versus the value $\text{RMSE}_{\{N_{\theta},N_t\}}\left(v^{\textbf{N}}\right)$. The slope of the regression line of the CD method (plotted in blue) is $-1\ldotp 80$ and of the Chebyshev method (plotted in red)  is $-1\ldotp85$. The spectral convergence that we could expect in the Chebyshev method  does not occur due to the regularity of the problem.

In the right hand side of Figure \ref{Ch3converrtesptempvalpimed}, we have plotted, in logarithmic scale, the number $N_{t}$ of time steps versus the value $\text{RMSE}_{\{N_{\theta},N_t\}}\left(v^{\textbf{N}}\right)$. The slope of the regression line of the CD method (solid-blue) is $-2\ldotp 31$ as it could be expected from an order 2 method. The slope of the Chebyshev method (solid-red)  is $-1\ldotp4$. We note that for large values of $N_t$ we reach very soon the error limit marked by the size of $N_{\theta}$.

We carry out a second experiment doubling the value of $N_{\theta}$ (right-dashed-blue/red) to check that the lowest value reached by the temporal error was given by the size of the spatial mesh.

Depending on the error tolerance, we might need a big value for $N_{\theta}$ in the CD method but much smaller in the Chebyshev method. This makes that, depending on the required precision, Chebyshev performs better in computational cost than CD. This will be studied below.

\subsection{Location of the Buying Region frontier at time $\hat{t}_1$}\label{Ch3numresultsfronpimed}

From Proposition \ref{Ch3tradpropertapolar}, we know that in polar coordinates $\text{BR}_F(\hat{t}_1)=\frac{\pi}{2}$.

Given a number of time steps $N_t$, we look for $t_{l_1}\in\{t_l\}_{l=0}^{N_t}$ which is nearest to $\hat{t}_1$ and define the Absolute Error (just for this experiment) as:
\begin{equation*}
\text{Absolute Error}_{\textbf{N}}(\hat{t}_1)=\left|\text{BR}^{\textbf{N}}_F(t_{l_1})-\frac{\pi}{2}\right|.
\end{equation*}

The next figure shows the convergence of spatial error (left) for $\Delta t=3\ldotp9 \cdotp 10^{-4}$ and different number of spatial nodes $N_{\theta}$. The right side shows the convergence of temporal error for $N_{\theta}=2048$ (Chebyshev) and $4960$ (CD) and different values of $N_t$.

\begin{figure}[h]
\centering
\includegraphics[width=12.5cm,height=5.3cm]{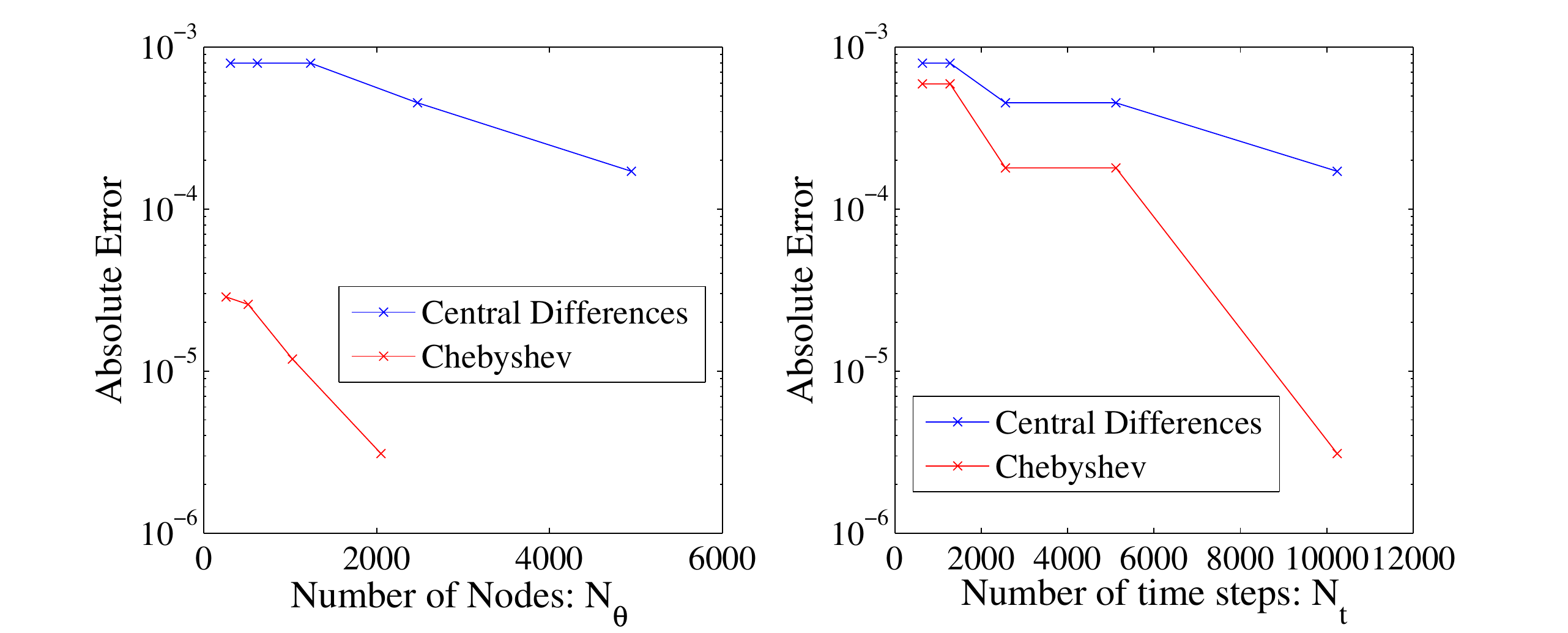}
\caption[Spatial and temporal error convergence fronpimedpimed]{\label{Ch3converrtesptempfronpimed} Spatial (left) and Temporal (right) Error (semilogarithmic scale) of instant when $BR_F=\frac{\pi}{2}$ with the CD (blue) and Chebyshev (red) methods.}
\end{figure}

The spatial error (left) reduces as we increase the value of $N_{\theta}$. At equal number of nodes, the Chebyshev method gives much smaller errors than the CD method.

Concerning the temporal error, the results are step shaped because of the definition of Absolute Error and the time partition when $\Delta t$ is halved. Each time partition is included in the following one and $t_{i_1}$ sometimes changes and sometimes not. The temporal error reduces as we increase the value of $N_{t}$. As in the spatial error, the Chebyshev method outperforms the CD method.

\subsection{First instant when is optimal to have a positive amount of the stock.}\label{Ch3numresultsfronnocero}

From Proposition \ref{Ch3tradpropertapolar}, we know that $\text{BR}_F(t)=0, \quad t\geq \hat{t}_0$ where $\hat{t}_0$ is explicitly computable.

Given a number of time steps $N_t$, we look for $t_{l_0}\in\left\{t_l\right\}_{l=0}^{N_t}$ such that
\begin{equation*}
t_{l_0}\geq \hat{t}_0 > t_{l_0+1}
\end{equation*}

For the Chebyshev method, the $\text{BR}^{\textbf{N}}_F$ may be bigger than 0 a few time steps prior to $l_0$. We note that in the Chebyshev method, the lower limit of $I(t_l), \ t_l\in[\hat{t}_0,T]$ is the Buying frontier.

In left picture of Figure \ref{Ch3v1BuyingfrontierCheby}, we have plotted the numerical estimation of the Buying Frontier with the Chebyshev method for, $N_{\theta}=256$ (blue), $N_{\theta}=512$ (red), $N_{\theta}=1024$ (green) $N_{\theta}=2048$ (black). In the right picture we zoom around $\hat{t}_0$.

\begin{figure}[h]
\centering
\includegraphics[width=13cm,height=5.5cm]{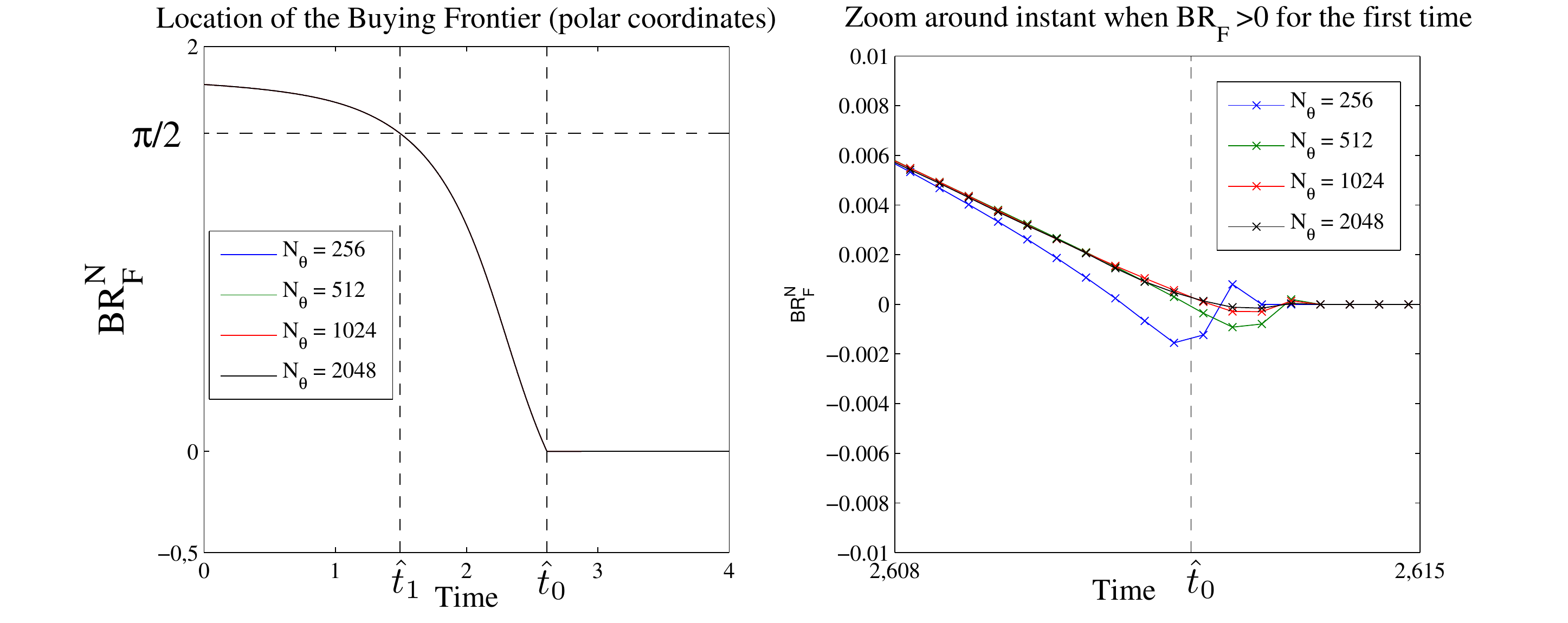}
\caption[Numerically computed Buying Frontier with the Chebyshev method]{\label{Ch3v1BuyingfrontierCheby} Numerically computed Buying Frontier with the Chebyshev method for $t\in[0,4]$ (left) and zoom around $\hat{t}_0$ (right).}
\end{figure}

Let $k\geq 0$ be the biggest value such that
\begin{equation*}
\text{BR}^{\textbf{N}}_F(t_{l_0+k})> 0.
\end{equation*}

If $k>0$, the location of the Buying Frontier oscillates around 0 for $t_l\in\{t_{l_0+k}, ..., t_{l_0+1}\}$ and for $t_{l}< t_{l_0}$ when it behaves as we could expect from Proposition \ref{Ch3tradpropertapolar}.

Numerical experiments show that it is better to let $\text{BR}^{\textbf{N}}_F(t_l)$ oscillate around 0 rather than imposing $\text{BR}^{\textbf{N}}_F(t_l)=\max\{\text{BR}^{\textbf{N}}_F(t_l), \ 0\}$.

The oscillations observed in Figure \ref{Ch3v1BuyingfrontierCheby} are generated by the imposition of the Neumann conditions. The boundary error is controlled by $N_t$ and $N_{\theta}$, but  the spatial error is dominant in this experiment. The instant when the numerical solution begins to oscillate is always very close to $\hat{t}_0$ $\left(\left|t_{l_0-k}-\hat{t}_0\right|\leq 1\ldotp5\cdotp10^{-3}\right) $ and the size of the oscillations reduces as $N_{\theta}$ increases.

These oscillations are the error that we are going to study. They include all the negative values (since the Buying Frontier must be always positive) and any positive value for discrete times larger than $\hat{t}_0$. Thus, we define, for this method and experiment, the absolute error (AE) as
\begin{equation*}
\text{AE}^{Ch}=\max\left\{\left|\underset{l=0,1,...,N_t}{\min}\left\{\text{BR}^{\textbf{N}}_F(t_l)\right\}\right| \ , \left|\underset{l=l_0+1,l=l_0+2,...,N_t}{\max}\left\{\text{BR}^{\textbf{N}}_F(t_l) \right\}\right|\right\}.
\end{equation*}

We fix $\Delta t=3\ldotp9 \cdotp 10^{-4}$ and compute the absolute error for several values for $N_{\theta}$). In Figure  \ref{Ch3v1ConvespfronnoceroCheby} we plot, in logarithmic scale, the value of $N_{\theta}$ versus the absolute error. As we can see the error is rapidly reduced by increasing $N_\theta$.

\begin{figure}[h]
\centering
\includegraphics[width=6cm,height=5cm]{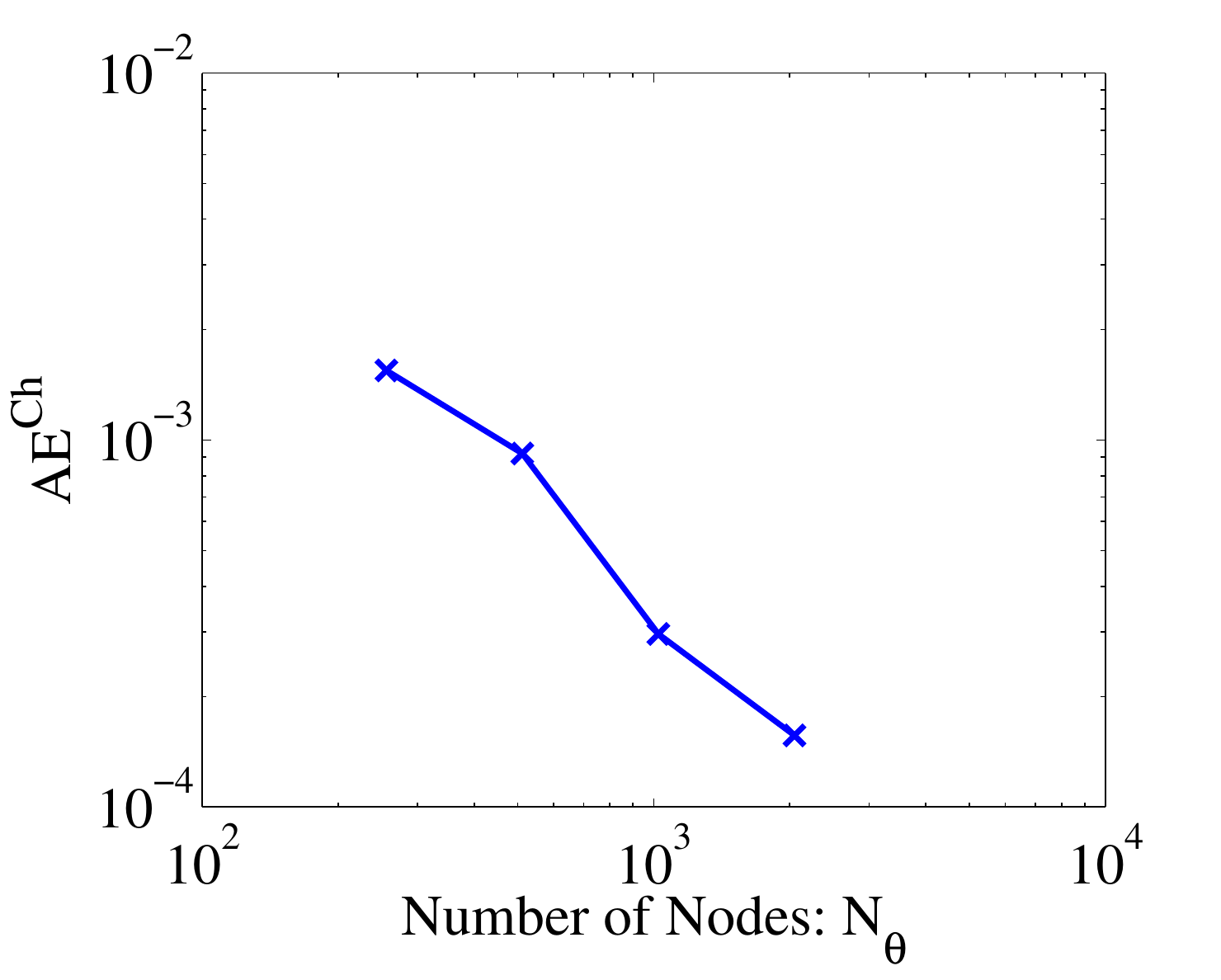}
\caption[Spatial Error convergence (Chebyshev method) when it is optimal to start buying the stock]{\label{Ch3v1ConvespfronnoceroCheby} Spatial error convergence of the first instant when it is optimal to have a positive amount of stock (Chebyshev method).}\label{figura}
\end{figure}

\subsection{Stationary state}

$BR_F$ and $SR_F$ tend to a stationary state as $T\rightarrow \infty$ that can also be computed explicitly (see Proposition \ref{Ch3tradpropertapolar}). Computed with the same model parameters as before but for $T=30$ years (see Figure \ref{Ch3stationaryCN2}), frontiers have stabilized a few years before reaching $t=0$ at:
\begin{equation*}
\begin{aligned}
& \text{Buying Frontier:} \ 1\ldotp8626 \ \ (1\ldotp8622 \ \text{exact value}) \\
& \text{Selling Frontier:} \ 2\ldotp1559 \ \ (2\ldotp1561 \ \text{exact value})
\end{aligned}
\end{equation*}
computed with the Chebyshev method ($\Delta t=10^{-4}$, $N_{\theta}=512$).

We define the absolute error (for this experiment) as
\begin{equation*}
\text{Absolute Error}=\left|\text{BR}^{\textbf{N}}_F(0)-BR_s\right|
\end{equation*}

We study the spatial ($\Delta_t=10^{-3}$ and several values for $N_{\theta}$) and temporal ($N_{\theta}=4960$ for the CD, $N_{\theta}=512$ for the Chebyshev method, and several values for $N_t$) error convergence. In Figure \ref{Ch3v1estacconv} we plot, in logarithmic scale, the value of $N_\theta$ (left) versus the absolute value of the error and the value of $N_t$ (right) versus the absolute value of the error for both methods.

\begin{figure}[h]
\centering
\includegraphics[width=13.5cm,height=6cm]{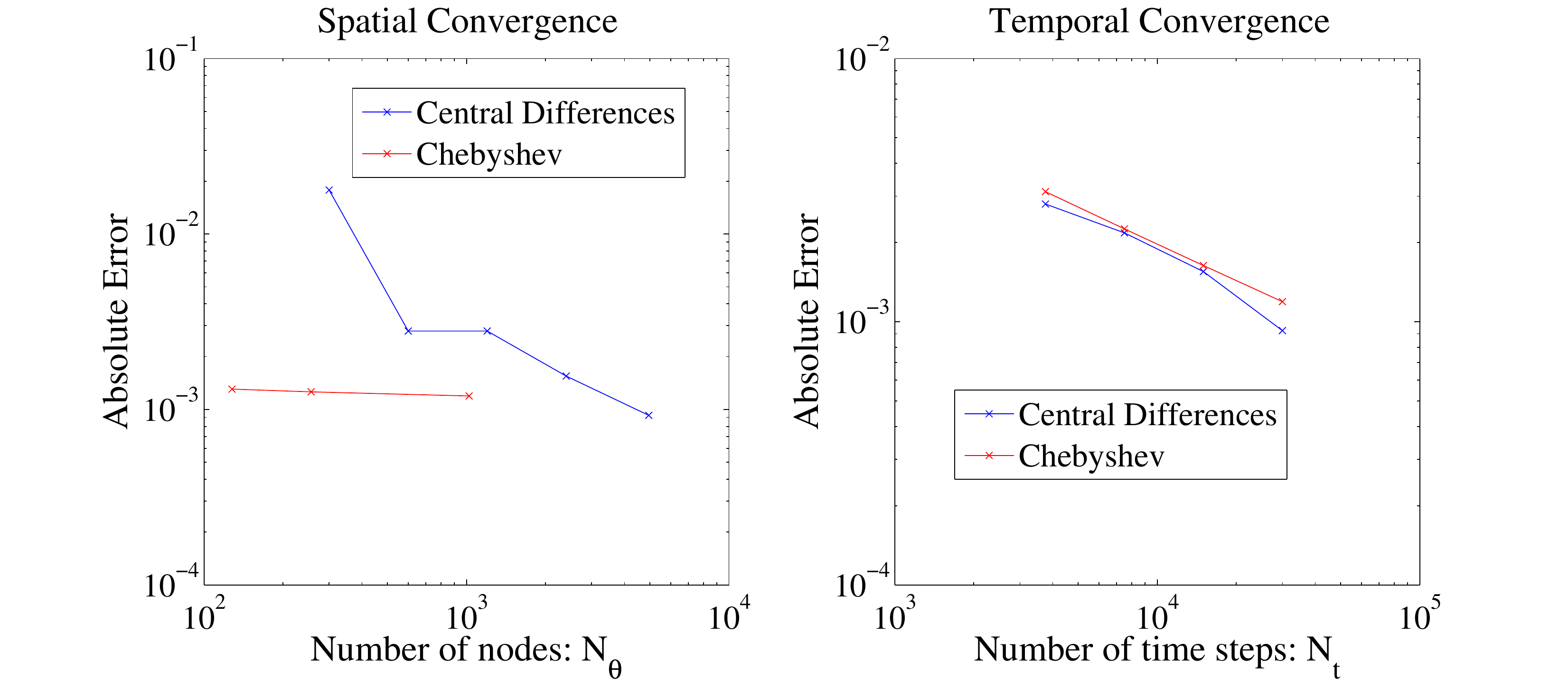}
\caption[Error Convergence of the Stationary State of the Buying Frontier]{\label{Ch3v1estacconv} Spatial (left) and Temporal (right) error convergence, in logarithmic scale, of the Stationary State of the Buying Frontier for the CD(blue) and Chebyshev (red) methods.}
\end{figure}

In this experiment, temporal error is dominant compared with respect to the spatial error in the Chebyshev method. In the case of the CD method, the error depends more in both the spatial and temporal discretizations. On the left side picture, we can see that the Chebyshev method reaches the error marked by the time discretization with the smallest number of nodes Therefore, if a high precision is required, Chebyshev will perform better than the Central Differences method.

The error behaviour of the Selling Frontier is similar to the one of the Buying Frontier.

\subsection{Performance Analysis}\label{Ch3numresultsperformance}

In this section we compare the relative performance
of the pseudospectral and finite difference methods. First of all, we fix several time and spatial discretization parameters:
\begin{enumerate}[(i)]
\item $\Delta t\in[0.02, \ 3\ldotp9^{-4}]$
\item $N_{\theta}\in[141, \ 1024]$ (Chebyshev)
\item $N_{\theta}\in[300, \ 6000]$ (Central Differences)
\end{enumerate}
and solve the problem with all the combinations of the different discretizations for both methods.

The lower and upper bounds of $N_{\theta}$ in the Central Differences method can be taken smaller or bigger. The criteria that we have employed is such that the numerical error varies between $10^{-4}$ and $10^{-8}$. The same reads for the upper bound of $N_{\theta}$ in the Chebyshev method.

We point that during the implementation of the method, we observed that if $N_{\theta}$ was not big enough, the location of the frontiers may oscillate (due to the Gibbs effect or to the fact that the polynomials are not accurate enough), complicating the location of $BR^{\textbf{N}}_F$ and $SR^{\textbf{N}}_F$ in (\ref{frontiers}). The Chebyshev spectral method is effective once enough resolution has been reached. This behaviour is typical of high order methods, see \cite{Frutos}. The employment of the adaptive interval $I(t_l)$ and an enough amount of interpolation nodes avoids the oscillations and allows to obtain just one numerical approximation of $BR^{\textbf{N}}_F$ and $SR^{\textbf{N}}_F$ in (\ref{frontiers}). The oscillations may appear if the following (empirical) bounds are violated
\begin{equation}\label{Ch3empirconstr}
\Delta t > 0.1, \quad  \Delta t < \frac{C}{N_{\theta}^{C_1}},
\end{equation}
where $C_1 \geq 1$ and numerical experiments suggest that $C_1$ might be a growing function of $N_{\theta}$.

The lowest value of $N_{\theta}$ in the Chebyshev method was chosen so that no oscillations appear. If a smaller number of interpolation nodes is chosen, the solution oscillates and the error worsens.

We plot the value of $\text{RMSE}_{\{N_{\theta},N_t\}}\left(v^{\textbf{N}}(0,t)\right)$ (\ref{Ch3errrorcuadmed}) versus the computational time employed in computing $v^{\textbf{N}}$ for each different spatial and temporal meshes in logarithmic scale.

\begin{figure}[h]
\centering
\includegraphics[width=13.5cm,height=6cm]{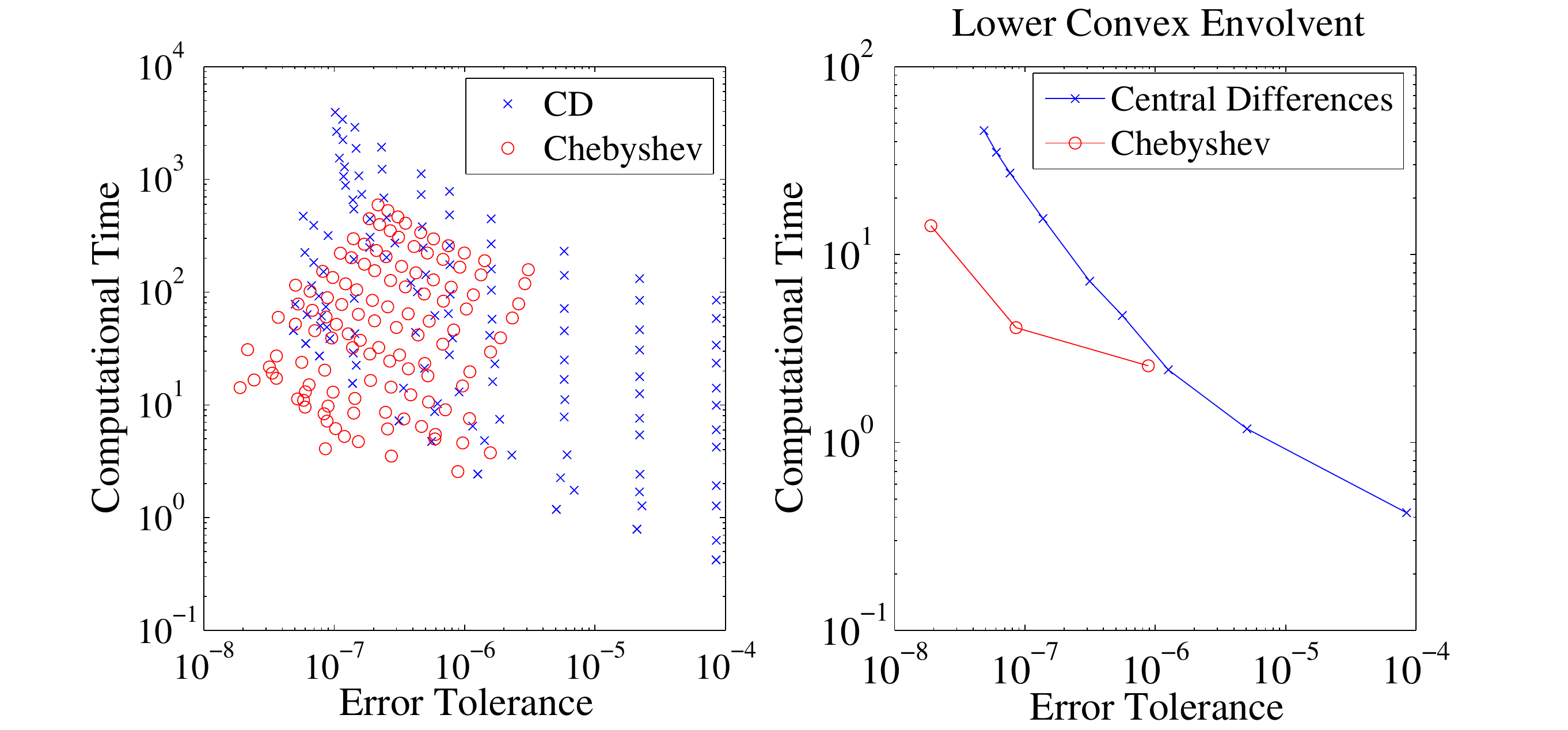}
\caption[{Performance comparison of the Error at $v^{\textbf{N}}(0,t)$.}]{\label{Ch3v1performanceerror} Performance comparison of the Error at $v^{\textbf{N}}(0,t)$.  In logarithmic scale, we plot (left), the value of RMSE versus the total computational costs of CD (blue) and Chebyshev (red) methods and their respective lower enveloping curves (right).}
\end{figure}

The left-side picture of Figure \ref{Ch3v1performanceerror} represents the cloud of results for the different discretizations of each method. The right-side, which is more visual, represents the lower convex enveloping curve.

With the right-side picture, we can obtain an approximate behaviour of the evolution of the error versus the required computational time to reach that precision. We fix the error tolerance that we require for our problem and find which method and spatial and time discretization reaches it first.

As we can see, the CD method (blue in Figure \ref{Ch3v1performanceerror}) performs better if we do not require a high precision. If a higher precision is required, Chebyshev (red in Figure \ref{Ch3v1performanceerror}) performs better than CD.

A similar behaviour can be observed if we compare the errors of the rest of cases where we have explicit formulas.

\section{Conclusions}\label{Ch3Conclus}

The homothetic property of the Potential Utility function has been used to restate the investment problem in polar coordinates. This has allowed us to give an equivalent formulation of the problem in a bounded spatial domain.

Although some of the numerical difficulties that appear with the parabolic double obstacle problem are avoided, other problems may appear if we employ spectral methods. The Gibbs effect, which comes from the fact that the objective function is continuous but not differentiable at maturity, can complicate the location of the frontiers, but this issue can be circumvented by the employment of a time-adapted spatial mesh (Subsection \ref{Ch3MACCMAM}).

Simpler methods, as Central Differences, are not affected by the Gibbs effect and they are easier to implement. Nevertheless, they  require more computational work if high precisions are needed.

Further work may include the extension of the model including a consumption term or the design of spectral methods to optimal investment problems with other Utility functions, like the Exponential Utility. Furthermore, through the Indifference Pricing technique (see \cite{Carmona} and \cite{Davis2}), these kind of models can be applied to option valuation.

\end{document}